\documentclass[a4paper,11pt]{article}

\usepackage{fullpage}
\usepackage{amsmath,amsthm,amssymb}
\usepackage{graphicx}
\usepackage{caption,subcaption}

\usepackage{algorithm}
\usepackage{algpseudocode}

\algtext*{EndWhile} 
\algtext*{EndIf} 
\algtext*{EndFor} 

\newtheorem{lemma}{Lemma}
\newtheorem{theorem}[lemma]{Theorem}
\newtheorem{corollary}[lemma]{Corollary}
\newtheorem{proposition}[lemma]{Proposition}

\newcommand{\eps}{\varepsilon}

\usepackage{color}

\usepackage{hyperref}

\let\geq\geqslant
\let\leq\leqslant

\DeclareMathOperator{\Top}{Top}
\DeclareMathOperator{\blocked}{Blocked}

\title{A Faster Algorithm for Computing Motorcycle Graphs}

\author{Antoine Vigneron
\thanks{
King Abdullah University of Science and
Technology (KAUST),
Geometric Modeling and Scientific Visualization
Center, Thuwal 23955-6900, Saudi Arabia.
{\tt \{antoine.vigneron, lie.yan\}@kaust.edu.sa}}
\and 
Lie Yan\footnotemark[1]
}

\begin{document}
\maketitle

\begin{abstract}
We present a new algorithm for computing motorcycle graphs
that runs in $O(n^{4/3+\eps})$ time for any $\eps>0$,
improving on all previously known algorithms. 
The main application of this result is to computing
the straight skeleton of a polygon. It allows us 
to compute the straight skeleton of a non-degenerate polygon
with $h$ holes in $O(n \sqrt{h+1} \log^2 n+n^{4/3+\eps})$ expected time.
If all input coordinates are $O(\log n)$-bit 
rational numbers, we can compute the straight skeleton of a (possibly degenerate)
polygon with $h$ holes  in $O(n \sqrt{h+1}\log^3 n)$ expected time. 

In particular, it means that we can compute the straight skeleton
of a simple polygon in $O(n\log^3n)$ expected time if all input
coordinates are $O(\log n)$-bit rationals, while all previously known
algorithms have worst-case running time $\omega(n^{3/2})$.
\end{abstract}

\section{Introduction}

The straight skeleton of a polygon $P$ is a straight line graph
embedded in $P$, formed by the traces
of the vertices of $P$ when it is shrunk, each edge moving
at the same speed and remaining parallel to its original position.
(See \figurename~\ref{fig:mgdef}.)
\begin{figure}
\centering
        \begin{subfigure}[b]{.3\textwidth}
        \centering
        \includegraphics[width=\textwidth]{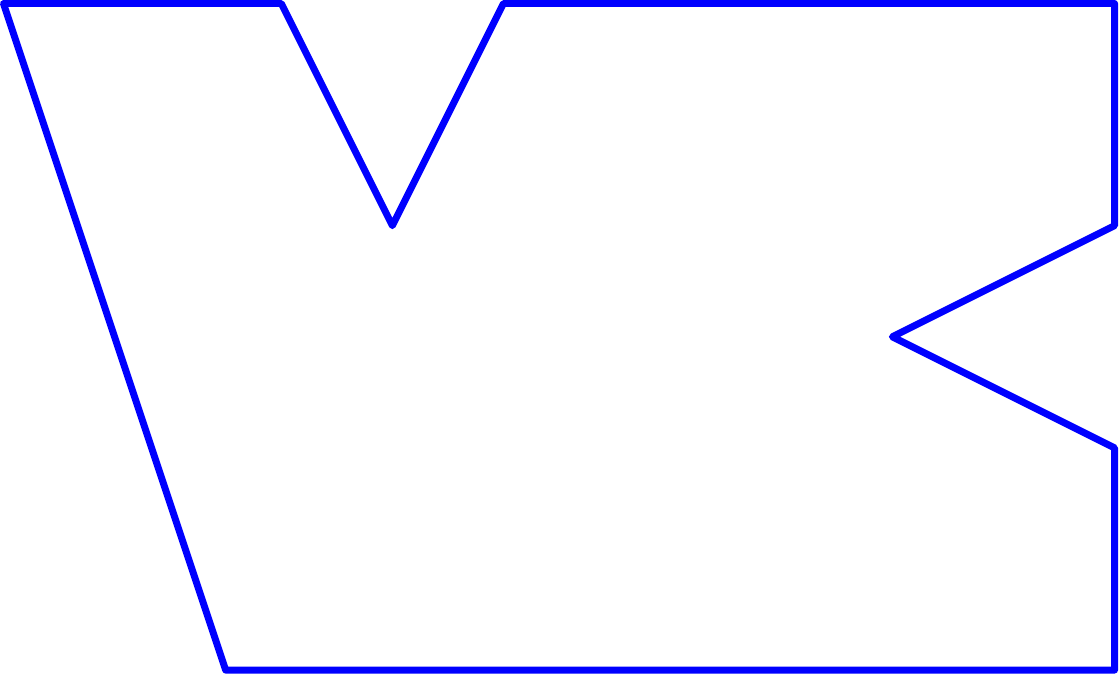}
        \caption{Input polygon}
        \end{subfigure}
        \quad
        \centering
        \begin{subfigure}[b]{.3\textwidth}
        \includegraphics[width=\textwidth]{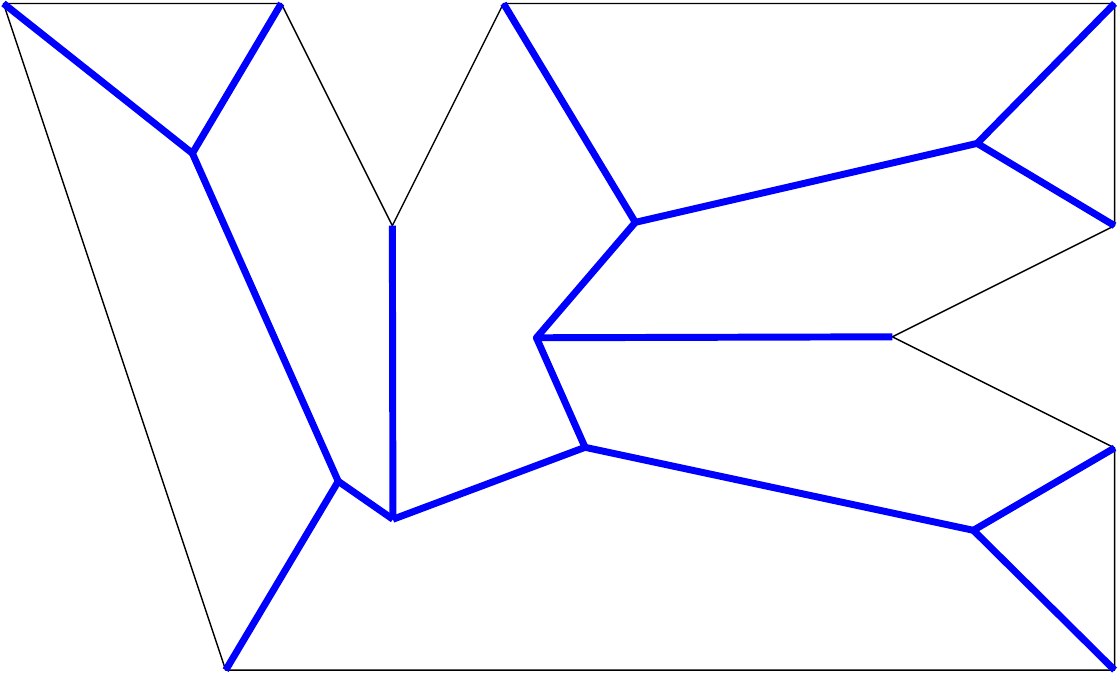}
        \caption{Straight skeleton}
        \end{subfigure}
        \quad
        \centering
        \begin{subfigure}[b]{.3\textwidth}
        \includegraphics[width=\textwidth]{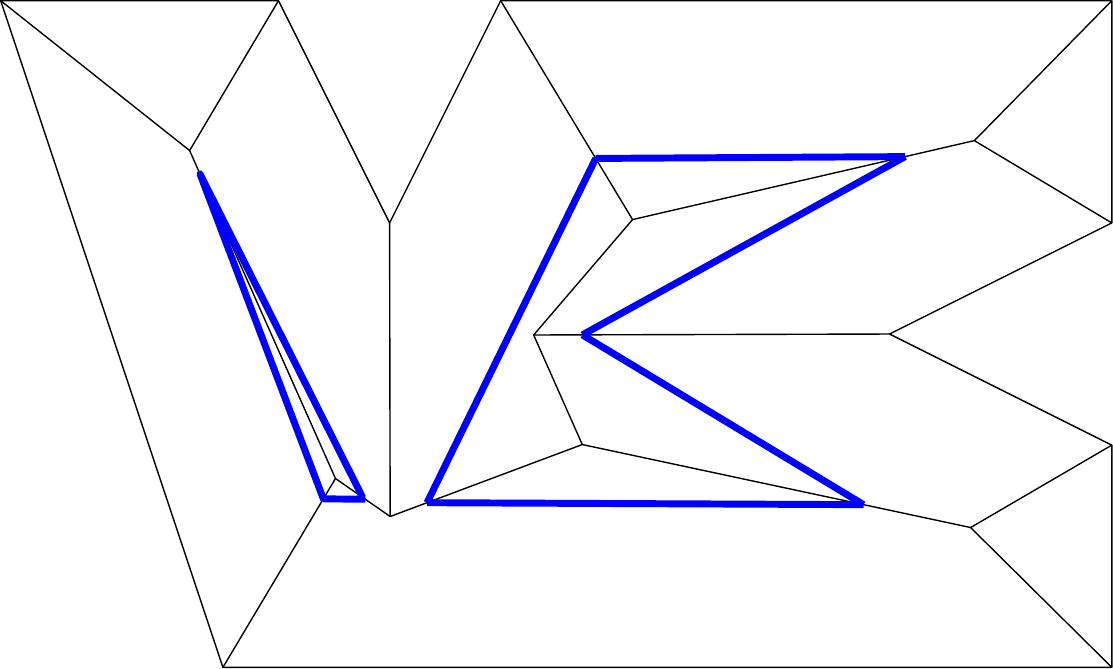}
        \caption{Offset polygon}
        \end{subfigure}
\caption{The straight skeleton of a polygon, and a corresponding offset
polygon.\label{fig:mgdef}}
\end{figure}
It has been known since at least the 19th century;
for instance, figures representing the straight skeleton 
can be found in the book by von Peschka~\cite{peschka-1877}.
Aichholzer et al.~\cite{aichholzer1995novel} gave the first efficient
algorithms for computing the straight skeleton, 
and presented it as an alternative to the medial axis having only straight-line edges. 
The straight skeleton has found numerous applications in computer
science, for instance to city 
model reconstruction~\cite{ld03}, architectural modeling~\cite{Kelly11},
polyhedral surface reconstruction~\cite{bgls03,Fel98,Oli96},
biomedical image processing~\cite{Clo00}.
It also has a direct application to CAD,
as it allows to compute  offset polygons~\cite{eppstein1999raising}.
The straight skeleton has become a standard tool in geometric computing, 
and thus fast and robust software has been developed to compute 
it~\cite{Cacciola,HuberH11,PalfraderHH12}.

The complexity of straight skeleton computation, however, is still
very much open. The previously best known algorithms were the
$O(n^{17/11+\eps})$-time algorithm by Eppstein and 
Erickson~\cite{eppstein1999raising}, and the 
$O(n^{3/2}\log^2 n)$-time randomized algorithm by Cheng and 
Vigneron~\cite{ChengV07}. The only known lower bound is
$\Omega(n \log n)$, by a reduction from sorting~\cite{jeffwebpage}.
In this paper, we give new subquadratic algorithms for computing
straight skeletons. In particular, if all input coordinates 
are $O(\log n)$-bit rational numbers, 
we give an $O(n\sqrt{h+1}\log^3 n)$-time randomized
algorithm for computing the straight skeleton of a polygon
with $h$ holes. It is the first near-linear time algorithm
for computing the straight skeleton of a simple polygon.

Eppstein and Erickson~\cite{eppstein1999raising} introduced
{\em motorcycle graphs} so as to model the main difficulty of straight
skeleton computation. We are given a set of $n$ motorcycles,
each motorcycle having a starting point and a velocity. Each motorcycle 
moves at constant velocity until it reaches the track left by another 
motorcycle, in which case it crashes. The resulting graph is called a 
motorcycle graph. (See \figurename~\ref{fig:mgss}a.)
\begin{figure}
\centering
        \begin{subfigure}[b]{.4\textwidth}
        \centering
        \includegraphics[width=\textwidth]{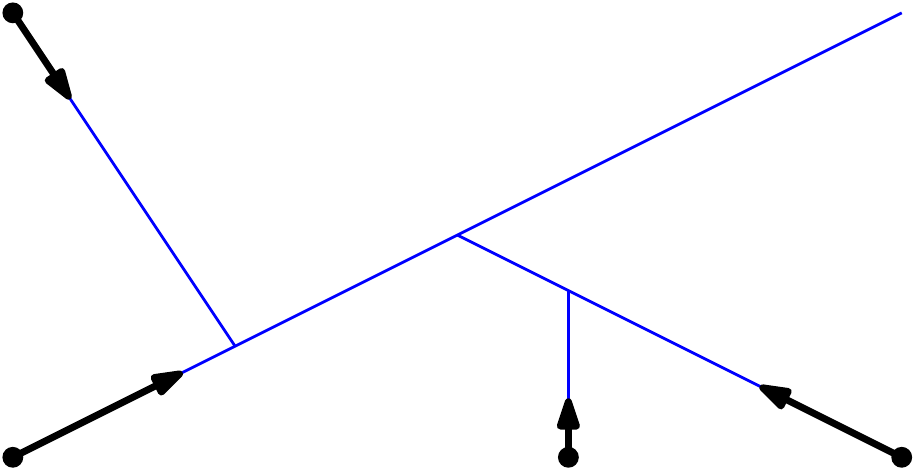}
        \caption{A motorcycle graph}
        \end{subfigure}
        \qquad \qquad
        \centering
        \begin{subfigure}[b]{.4\textwidth}
        \includegraphics[width=\textwidth]{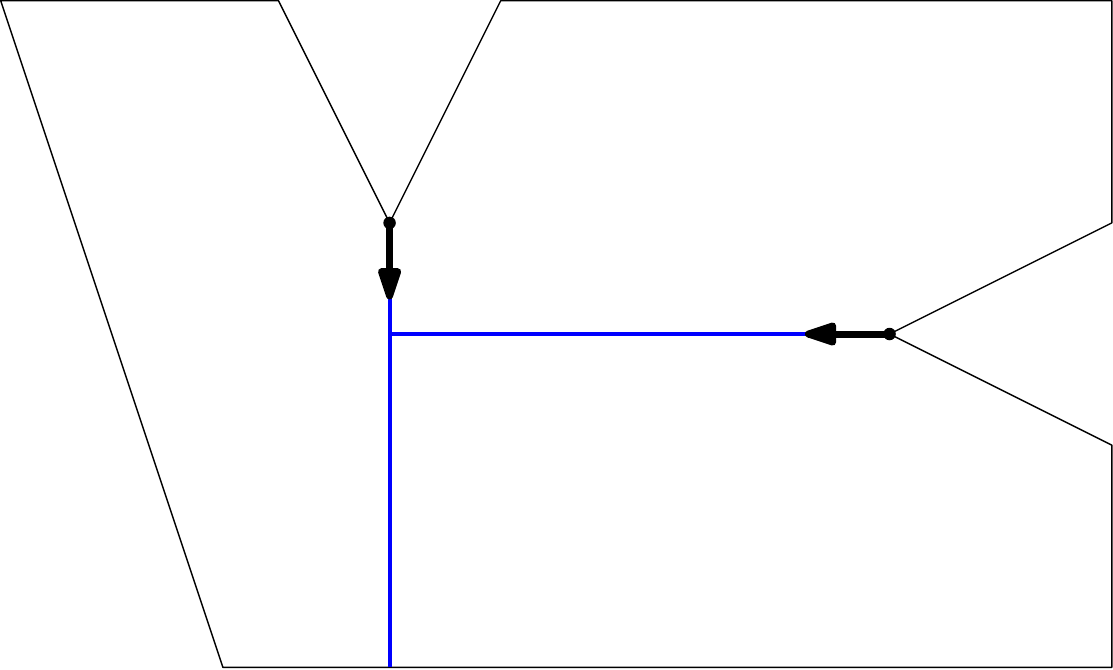}
        \caption{Induced motorcycle graph}
        \end{subfigure}
\caption{The motorcycle graph of a set of four motorcycles (a), and
the motorcycle graph induced by the polygon from \figurename~\ref{fig:mgdef}.
\label{fig:mgss}}
\end{figure}
The motorcycle graph is a special case of the straight 
skeleton, where each motorcycle is modeled by a small and thin triangle.
Conversely, a polygon induces a motorcycle graph, where each
motorcycle starts at a reflex vertex and moves with the same
velocity as this vertex moves during the shrinking process.
(See \figurename~\ref{fig:mgss}b.)
Cheng and Vigneron~\cite{ChengV07} showed that computing
the straight skeleton of a non-degenerate polygon 
reduces to computing this induced motorcycle graph, and a lower 
envelope computation; Huber and Held extended this proof to 
degenerate cases~\cite{HuberH11}. The lower envelope computation
can be done in $O(n\sqrt{h+1}\log^2 n)$ expected time if $P$
has $h$ holes. 

Previously, the bottleneck of straight skeleton computation
was the induced motorcycle graph computation. 
This is our main motivation for designing a faster motorcycle
graph algorithm. In this paper, we give an algorithm for 
computing a motorcycle graph that runs in $O(n^{4/3+\eps})$ time,
for any $\eps>0$, improving on all previously known algorithms.
Here is a brief description of our algorithm. For each
motorcycle, we maintain a tentative
track, which may be longer than its actual track
in the motorcycle graph. We also maintain a set of target points, 
which contains the endpoints of the
tentative tracks that have been created earlier for this motorcycle,
and that it  has not reached yet.
Initially, the tentative tracks are empty, and
then we try to extend them one by one,
all the way to the destination point. If two tentative tracks
cross, we retract them, by roughly halving the number of
possible crossing points on each of them. After performing this
halving, the tentative tracks do not intersect, and we can
safely move the motorcycle that reaches the end of its 
tentative track first. Then we try to extend the tentative
track of this motorcycle to its next target point, and repeat the 
process. An example is given in Appendix~\ref{sec:example}.

Apart from obtaining better time bounds for straight skeleton computation,
there are at least two other reasons for studying motorcycle
graphs. First, Huber and Held~\cite{HuberH11}
used the idea of computing the straight skeleton
from its induced motorcycle graph to design and
implement a practical straight skeleton algorithm. So it is important,
even in practice,
to get a better understanding of motorcycle graph computation.
Another motivation for studying motorcycle graphs is a direct application 
to  computer graphics, for quad mesh partitioning~\cite{EppsteinGKT08}.

Some of our results make no particular assumptions on the input,
but we also present a few results where we assume that the input coordinates 
are $O(\log n)$-bit rational numbers. We believe that this assumption is
sufficient for most applications. For instance, in the applications
mentioned above, it is hard to imagine that the input polygons would have
features smaller than  1nm, and size larger than 1000km,
so 64-bit integers should be more than sufficient.

\subsection{Summary of our results and comparison with previous work}
\label{sec:ourresults}

The main novelty in this paper is our algorithm for computing 
motorcycle graphs (Section~\ref{sec:algorithm}). 
This algorithm is essentially different from
the two previous algorithms~\cite{ChengV07,eppstein1999raising}
that both simulate the construction in chronological order.
Our algorithm, on the other hand, does not construct the
motorcycle tracks in chronological order: It may move some motorcycle
to its position at time $t$, and then later during the execution
of the algorithm, move another motorcycle
to its position at an earlier time $t'<t$. (We give one such example
in Appendix~\ref{sec:example}.) This answers an open question
by Eppstein and Erickson~\cite[end of Section 5]{eppstein1999raising},
who asked whether the running time can be improved by relaxing
the chronology of the events.

Our algorithm uses two auxiliary
data structures, one for ray shooting, and another for halving queries.
Given a query segment on the supporting line of a motorcycle, 
a halving query returns a splitting point on this segment such
that there are roughly the same number of intersections with other
supporting lines on both sides.
(See Section~\ref{sec:notation}.) The implementation of these data 
structures in different settings lead to different time
bounds. 

For all our results, we use the standard real-RAM model~\cite{PreparataShamos}, 
that allows to perform arithmetic operations exactly on arbitrary
real numbers. But for some of our results, we make
the assumption that all input coordinates are $O(\log n)$-bit rational 
numbers. It has two advantages: It yields better time bounds, and
allows us to handle the straight skeleton of degenerate
polygons. This improvement comes from the fact that, for bounded precision
input, two distinct crossing points between the supporting lines
of two pairs of motorcycles are at distance $2^{-O(\log n)}$ from each
other. It allows us to use a simpler halving scheme: Instead of halving
a segment according to the number of intersection points, we use the midpoint 
according to the Euclidean distance.
(See Section~\ref{sec:bpmg} and~\ref{sec:prelss}.)

\paragraph{Arbitrary precision input.} For our first set of results,
the input coordinates are arbitrary real numbers, on which we can perform 
exact arithmetic operations. In this case,
our new algorithm computes a motorcycle graph in $O(n^{4/3+\eps})$ time
(Theorem~\ref{th:general}).
This improves on the two subquadratic algorithms that
were known before: the $O(n^{17/11+\eps})$-time algorithm 
by Eppstein and Erickson~\cite{eppstein1999raising}, which
was first published in 1998, 
and the $O(n^{3/2}\log n)$-time algorithm by Cheng and 
Vigneron~\cite{ChengV07}, which first appeared in 2002.

We also give, in Section~\ref{sec:coriented}, 
an $O(C n \log^2(n)\min(C,\log n))$-time algorithm
for the case of $C$-oriented motorcycles, where the velocities
take only $C$ different directions. 
This improves on 
the algorithm by Eppstein and Erickson~\cite{eppstein1999raising},
which runs in $O(n^{4/3+\eps})$ time when $C=O(1)$.

Our last result with arbitrary precision input is an 
$O(n^{4/3+\eps}+n \sqrt{h+1}\log^2 n)$ expected time algorithm for computing
the straight skeleton of a polygon with $n$ vertices and $h$ holes.
(This result does not hold for a degenerate polygon where two reflex vertices
may collide during the shrinking process, as in \figurename~\ref{fig:degenerate}.)
\begin{figure}
\centering
        \begin{subfigure}[b]{.3\textwidth}
        \centering
        \includegraphics[width=\textwidth]{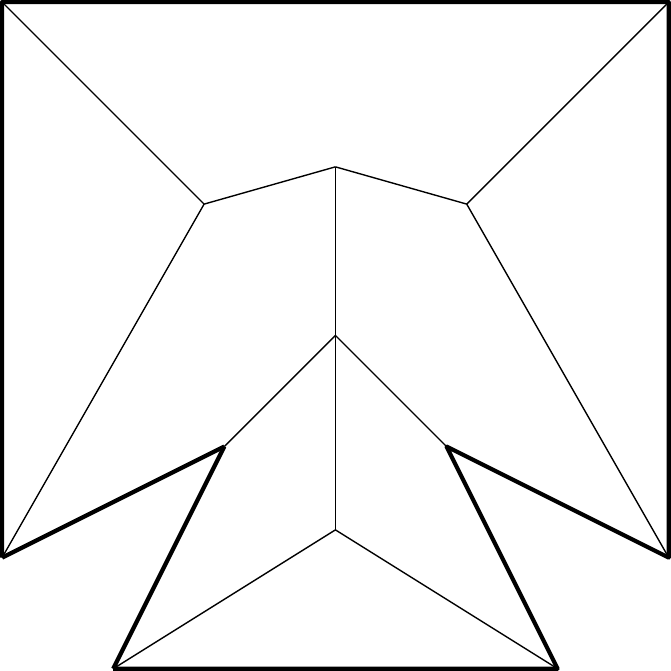}
        \end{subfigure}
        \quad
        \begin{subfigure}[b]{.3\textwidth}
        \centering
        \includegraphics[width=\textwidth]{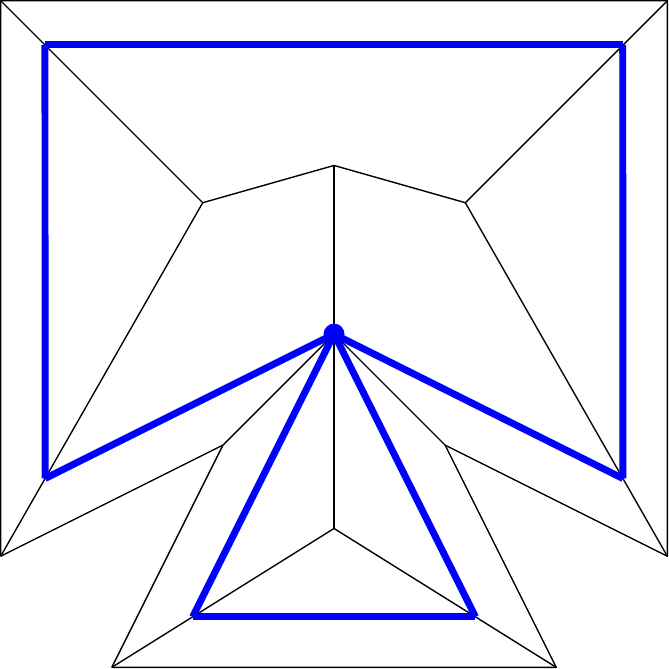}
        \end{subfigure}
        \quad
        \begin{subfigure}[b]{.3\textwidth}
        \centering
        \includegraphics[width=\textwidth]{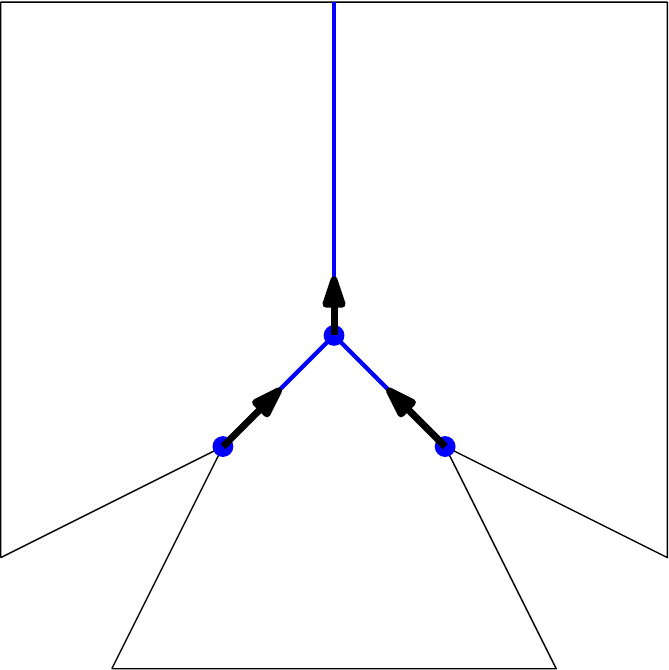}
        \end{subfigure}
\caption{A degenerate polygon and its straight skeleton (left).
Two reflex vertices collide during the shrinking process, and a new
reflex vertex appears (middle). The induced motorcycle graph,
where a new motorcycle appears when the two other crash (right).
\label{fig:degenerate}}
\end{figure}
It improves on the algorithm
by Cheng and Vigneron~\cite{ChengV07} which runs in 
$O(n^{3/2}\log(n)+n \sqrt{h+1}\log^2 n)$ expected time. It also improves
on the $O(n^{17/11+\eps})$ time bound of the algorithm by 
Eppstein and Erickson~\cite{eppstein1999raising}, but their algorithm
is deterministic and applies to degenerate cases.

\paragraph{Bounded precision input.} The following results
hold when all input coordinates
are $O(\log n)$-bit rational numbers. There has been recent
interest in studying computational geometry problems under
a bounded precision model (the word RAM), for instance
the computation of Delaunay triangulations, convex hulls, polygon
triangulation and line segment intersections~\cite{BuchinM11,ChanP09}.

We first show in Section~\ref{sec:bpmg} that 
a motorcycle graph can be computed in $O(n \log^3 n)$ time if
the motorcycles move within a simple polygon, starting
from its boundary. The only other non-trivial cases where we
know how to compute a motorcycle graph in near-linear time seem to be
the case where all velocities have positive $x$-coordinate,
which can be solved in $O(n\log n)$ time by plane sweep, 
the case of a constant number of different velocity
vectors~\cite{eppstein1999raising}, or a constant number
of directions (Section~\ref{sec:coriented}). 

Then in Section~\ref{sec:bpss}, we show that the straight
skeleton of a polygon with $n$ vertices and $h$ holes
can be computed in $O(n\sqrt{h+1}\log^3 n)$ expected time. This
result still holds in degenerate cases. So with bounded-precision
input, and if we allow randomization, it improves on the 
$O(n^{17/11+\eps})$-time algorithm by Eppstein and 
Erickson~\cite{eppstein1999raising}.
When $h=o(n/\log^2 n)$, it also improves on the $O(n^{3/2}\log^2 n)$-time 
algorithm by Cheng and Vigneron~\cite{ChengV07}, 
which cannot handle all degenerate cases. 

In particular, our algorithm runs in expected $O(n\log^3 n)$ time 
when $h=0$, so it is the first
near-linear time algorithm for computing the straight skeleton
of a simple polygon. The previously
best known algorithms run in $\omega(n^{3/2})$ time
in the worst case~\cite{ChengV07,eppstein1999raising}.

\subsection{Notation and preliminaries}\label{sec:notation}
For any two points $p,q$, we denote by $\overline{pq}$ the
line segment between $p$ and $q$. Unless specified otherwise,
$\overline{pq}$ is a closed segment. The {\em relative interior}
of $\overline{pq}$ is the open segment $\overline{pq}\setminus\{p,q\}$.
We say that two segments {\em cross} if their relative interiors
intersect.

The motorcycles are numbered from $1$ to $n$. Each
motorcycle $i$ has a {\em starting point} $s_i$, moves with constant
velocity $\vec v_i$, and has a {\em destination point} $d_i$ that lies
in the ray $(s_i,\vec v_i)$. 
(See \figurename~\ref{fig:notation}a.)
\begin{figure}
\centering
        \begin{subfigure}[b]{.55\textwidth}
        \centering
        \includegraphics[width=\textwidth]{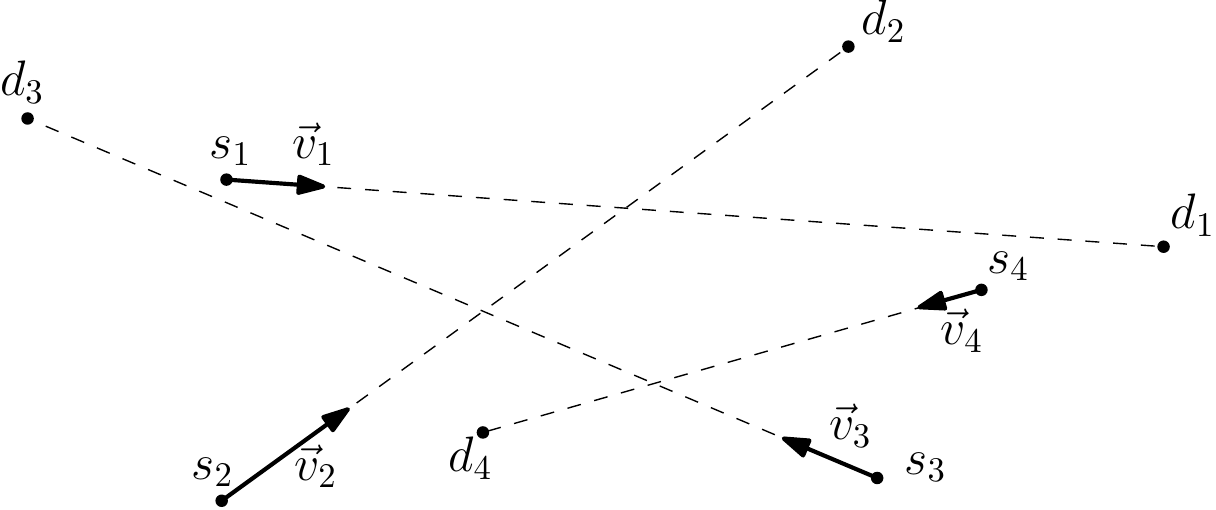}
        \caption{Input}
        \end{subfigure}
        \qquad
        \centering
        \begin{subfigure}[b]{.35\textwidth}
        \includegraphics[width=\textwidth]{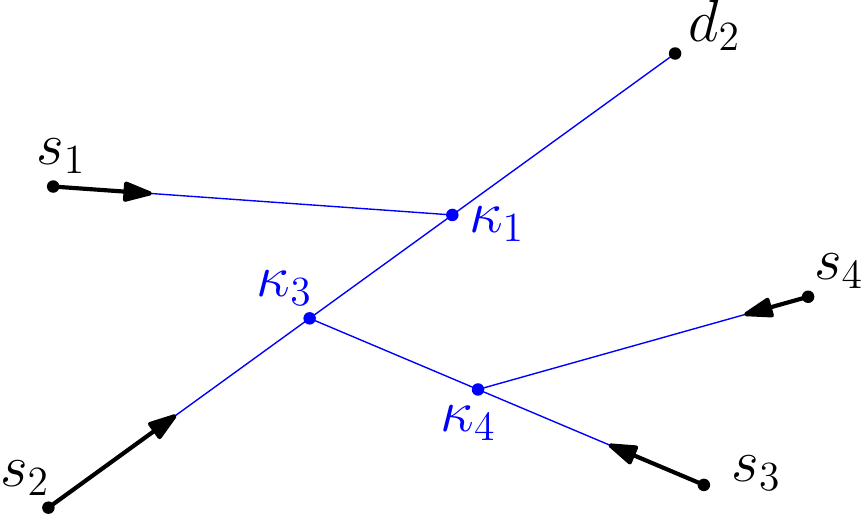}
        \caption{Motorcycle graph}
        \end{subfigure}
\caption{The input to the motorcycle graph problem (a),
and the resulting motorcycle graph (b).
\label{fig:notation}}
\end{figure}
When $p \in \overline{s_id_i}$, we 
denote by $\tau(i,p)$ the time when motorcycle $i$ reaches $p$, 
so $p=s_i+\tau(i,p)\vec{v_i}$.
The {\em supporting line} $\ell_i$ of motorcycle $i$ is the
line through $s_i$ with direction $\vec v_i$. 

Each motorcycle $i$ starts at $s_i$ at time 0,
and moves at velocity $\vec v_i$ until it meets the track 
left by another motorcycle and crashes, or it reaches $d_i$
and stops. So motorcycle $i$ crashes if it reaches a point
$p$ such that $\tau(i,p)\geq \tau(j,p)$, for some motorcycle $j$
that has not crashed or stopped earlier than $\tau(j,p)$.
If motorcycle $i$ crashes, we denote by $\kappa_i$ the point where
it crashes, called the {\em crashing point}. 
(See \figurename~\ref{fig:notation}b.)
Otherwise,
$i$ reaches $d_i$, and we set $\kappa_i=d_i$. The {\em trajectory}
of $i$ is the segment $\overline{s_i\kappa_i}$; in other words it is the
track of $i$ in the motorcycle graph.  

In the original motorcycle graph problem, the destination point
$d_i$ is at infinity in direction $\vec v_i$. We can handle this case by computing
a bounding box that includes all the vertices of the arrangement
of the supporting lines $\ell_i$, $i=1,\dots,n$,  and choosing
as destination points the intersections of the rays $(s_i,\vec v_i)$
with the bounding box. The bounding box can be computed 
in $O(n \log n)$ time as any extreme vertex in the arrangement is the 
intersection of two lines with consecutive slopes.

Unless specified otherwise, we make the following general position assumptions.
No two motorcycles share the same supporting line, or have
parallel supporting lines.
No three supporting lines are concurrent. No point $s_i,d_i$ lies on $\ell_j$
if $j \neq i$. No two motorcycles reach the same point at the same time.
(We make these assumptions so as to simplify the description of
the algorithm and the proofs, but our results still hold in degenerate
cases.)

The {\em crossing point} $\chi_{ij}$ is the intersection between $\ell_i$ 
and $\ell_j$, and thus $\chi_{ij}=\chi_{ji}=\ell_i \cap \ell_j$. 
The {\em size} $|pq|$ of a segment $\overline{pq}$
is the number of crossing points $\chi_{ij}$ that 
lie in $\overline{pq}$. (See \figurename~\ref{fig:size}.)
\begin{figure}
\begin{center}
\includegraphics[scale=.9]{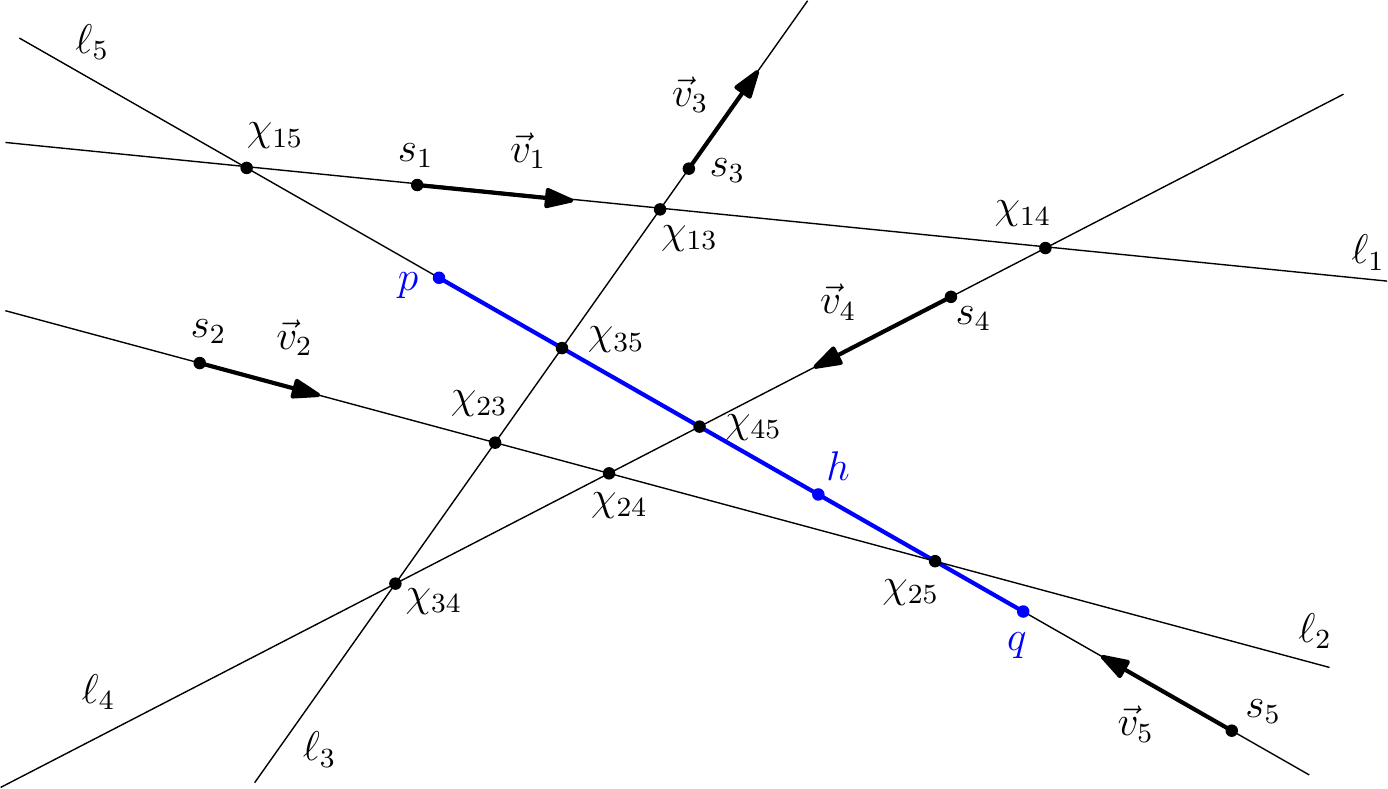}
\end{center}
\caption{The size of $\overline{pq}$ is $|pq|=3$.
Point $h$ is a possible result $h(p,q)$ of a halving query
$(5,p,q)$ with $\rho=1/2$. \label{fig:size}}
\end{figure}
We will  need a data structure 
to answer {\em halving queries}: Given a query $(i,p,q)$
where $p,q$ are points on the supporting line $\ell_i$,
find a point $h=h(p,q) \in \overline{pq}$ such that 
$|ph| \leq \lceil \rho|pq|\rceil$ and $|hq| \leq \lceil \rho|pq| \rceil$,
for a constant $\rho<1$. In addition, we require that $h$ is
not a crossing point, and that both $|ph|$ and $|hq|$ are strictly 
smaller than $|pq|$ if $|pq|\geq 2$.

\section{Algorithm for Computing Motorcycle Graphs}\label{sec:main}

In this section, we present our algorithm for computing motorcycle
graphs, as well as its proof of correctness and analysis.
An example of the execution of this algorithm on a set of
4 motorcycles is given in Appendix~\ref{sec:example}. 

\subsection{Algorithm description}\label{sec:algorithm}

Our algorithm maintains, for each motorcycle $i$, a {\em confirmed track} 
$\overline{s_ic_i}$, and a {\em tentative track} $\overline{s_it_i}$,
such that $c_i \in \overline{s_id_i}$ and $t_i \in \overline{c_id_i}$.
So the tentative track is at least as long as the confirmed track.
As we will show in the next section, the confirmed track is a subset of 
the trajectory, so we have   $c_i \in \overline{s_i\kappa_i}$ at 
any time during the execution of the algorithm. The tentative 
track, however, may go beyond $\kappa_i$. 
(See Appendix~\ref{sec:example}.)

Our algorithm will ensure that
no two tentative tracks cross.  We keep all the tentative tracks in a
ray shooting data structure, so that we can enforce this invariant by checking
for intersection each time we try to extend a tentative track. This data structure
returns the first tentative track hit by a query ray $(p,\vec v)$, if any.
We also build a data
structure to answer halving queries, which will be used  to shorten tentative
tracks and keep them disjoint.

Our algorithm builds the motorcycle graph by extending the confirmed tracks
until they form the whole motorcycle graph. 
We may also update the tentative track of a motorcycle when we extend its
confirmed track. 

A set of {\em target points} is associated with each motorcycle $i$. 
In particular, we maintain in a stack $S_i$ 
the set of target points that lie beyond the confirmed track
of motorcycle $i$, thus
$S_i \subset \overline{c_id_i}\setminus \{c_i\}$.
In other words, $S_i$ records the target points that motorcycle $i$
has not reached yet. 
(See \figurename~\ref{fig:tracks}.)
\begin{figure}
\begin{center}
\includegraphics[scale=.75]{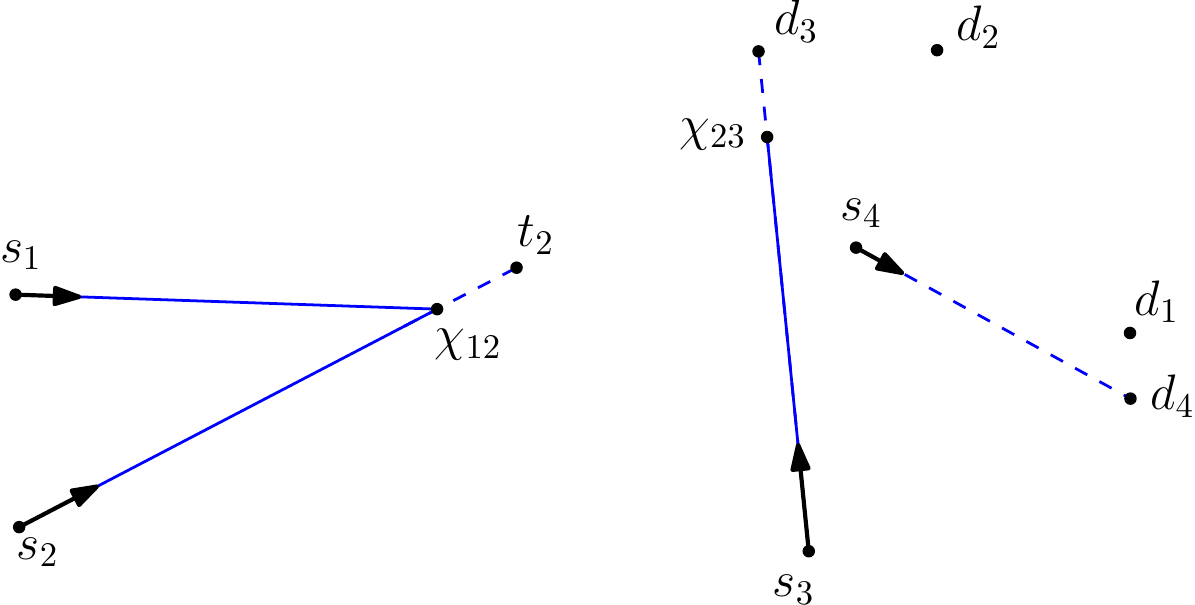}
\end{center}
\caption{This is the same example as Appendix~\ref{sec:example},
\figurename~(m).
The confirmed tracks are solid, and the tentative
tracks are dashed. For motorcycle 1, the confirmed track
and the tentative track go to $\chi_{12}=c_1=t_1=\kappa_1$.
The stack $S_1$ only records $d_1$. For motorcycle 2,
the confirmed track ends at $c_2=\chi_{12}$, the tentative
track ends at $t_2$, and
$S_2=(t_2,\chi_{23},d_2)$. For motorcycle 3, we have
$c_3=\chi_{23}$, $t_3=d_3$, and $S_3=(d_3)$. For motorcycle
4, we have $c_4=s_4$, $t_4=d_4$, and $S_4=(d_4)$.
\label{fig:tracks}}
\end{figure}
The stack $S_i$ is ordered from $c_i$ to $d_i$.
We denote by $\Top(S_i)$ its first element, so $\Top(S_i)$ is
the target point in $S_i$ that is closest to $c_i$.
At the beginning, we set $S_i=\{s_i,d_i\}$ for all $i$.
New target points will be created in Case (3b) of our algorithm, 
as described below.

If motorcycle $i$ has neither crashed nor stopped, 
then its tentative track ends at the first target point in $S_i$, 
so $t_i=\Top(S_i)$.  Otherwise, the tentative track and the confirmed
track are the same, thus $t_i=c_i$. So after a motorcycle has
crashed or stopped, the ray shooting data structure records its
confirmed track.

An {\em event} $(i,p)$ happens when a motorcycle $i$ reaches a target 
point $p$. We process events one by one, and while an event is being
processed, new events may be generated. After an event has been
processed, we process the earliest available event. As $t_i=\Top(S_i)$ is
the closest target point to $i$ in $S_i$, it means that we always
process the event $(i,t_i)$ such that $\tau(i,t_i)$ is smallest.
Note that it does not imply that our simulation is done in chronological order:
When we process an event $(i,t_i)$, we may create a new event $(j,p)$ such
that $\tau(j,p) < \tau(i,t_i)$. (See Appendix~\ref{sec:example}.)

We record in a priority queue $\mathcal Q$ 
the event $(i,t_i)$ for each motorcycle $i$ that has
not crashed or stopped. An event with earlier 
time $\tau(i,t_i)$ has higher priority. As $t_i=\Top(S_i)$,
we can update the event queue $\mathcal Q$ in $O(\log n)$ time each time
a stack $S_i$ is updated.
So we can find the next available event in
$O(\log n)$ time. The first $n$ events are the events 
$(i,s_i)$, $i=1,\dots,n$, and occur at time $t=0$.  We process these $n$ events
in an arbitrary order.

We now explain how to process an event $(i,t_i)$. To avoid confusion, for any
motorcycle $j$, we use the notation $c_j,t_j$ to denote the endpoints of
its confirmed and tentative track just before processing this event, and
we use the notation $c'_j,t'_j$ for their position just after processing this
event.
We first extend the confirmed track of motorcycle $i$ to $t_i$, 
thus $c'_i= t_i$. We also delete $t_i$ from $S_i$. We are now in 
one of the following cases:
\begin{itemize}
\item[(1)] If $t_i=d_i$, then motorcycle $i$ stops.
In order to avoid processing irrelevant events in the future, 
we remove $S_i$ from $\mathcal Q$. 
\item[(2)] If $t_i$ is a crossing point $\chi_{ij}$ that lies in
the confirmed track of $j$ (that is, $t_i \in \overline{s_jc_j}$), then 
$i$ crashes at $t_i$. So we remove $S_i$ from $\mathcal Q$.
\item[(3)] Otherwise, we try to extend the tentative track to the next target
point $q=\Top(S_i)$. So we perform a ray shooting query with ray $(t_i,\vec v_i)$,
which gives us the first track intersected by $\overline{t_iq}$, if any.
\begin{itemize}
\item[(3a)] If $\overline{t_iq}$  does not cross any track, then 
$t'_i=q$, and we do not need to do anything else to handle this event.
\item[(3b)] Otherwise, let $j$ be the result of the 
ray-shooting query, so $\overline{s_jt_j}$ 
is the first track hit by segment $\overline{t_iq}$, starting from $t_i$.
We shorten the tentative track of $i$, which means that we insert the new target 
point $\chi_{ij}$ into $S_i$, as well as the point $t'_i=h(t_i,\chi_{ij})$ 
obtained by a halving query on $\overline{t_i\chi_{ij}}$.
If the crossing point $\chi_{ij}$ does not lie in the confirmed track 
of $j$, that is, if $\chi_{ij} \in \overline{c_jt_j} \setminus \{c_j\}$, then
we also shorten the tentative track of $j$, so we insert $\chi_{ij}$ into 
$S_j$, and we insert $t'_j=h(c_j,\chi_{ij})$ into $S_j$.
\end{itemize}
\end{itemize}
After applying the rules above, we update the ray shooting 
data structure (if needed), and
we move to the next available event.

\subsection{Proof of correctness}\label{sec:correctness}

Initially, we create the target points $s_i,d_i$ for $i=1,\dots,n$. 
After this, we create new target points only in Case (3b) of our algorithm.
There are two types of such target points: the crossing
points $\chi_{ij}$ obtained by ray-shooting, and the points obtained by 
halving queries. We call $\chi$-targets the first type of target points, 
and $h$-targets the latter. By our assumption that the result of a
halving query is not a crossing point, a target point cannot
be both a $h$-target and a $\chi$-target. 

We need the following lemma. Remember that we say that two
segments cross if their relative interiors intersect.
\begin{lemma}\label{lem:invariant}
During the course of the algorithm, no two tentative tracks cross.
\end{lemma}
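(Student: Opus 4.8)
The plan is to prove the invariant by induction on the sequence of events, showing that each event-processing step preserves the property that no two tentative tracks cross. The base case is immediate: initially every tentative track $\overline{s_it_i}$ coincides with the single point $s_i$ (since $S_i=\{s_i,d_i\}$ and $t_i=\Top(S_i)=s_i$), and by the general position assumption no point $s_i$ lies on $\ell_j$ for $j\neq i$, so the degenerate tracks are pairwise disjoint. For the inductive step, I assume the tracks are pairwise non-crossing just before processing an event $(i,t_i)$, and I examine how the tentative tracks change. Only two motorcycles can have their tentative track modified during a single event: motorcycle $i$ itself, and — in Case (3b) — the motorcycle $j$ returned by the ray-shooting query. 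In Cases (1) and (2) the tentative track of $i$ shrinks to $c_i'=t_i$ (it becomes the confirmed track), which is a subsegment of the old track, so no new crossing can be created. In Case (3a), the ray-shooting query guarantees that $\overline{t_iq}$ hits no existing tentative track, so extending $\overline{s_it_i}$ to $\overline{s_iq}=\overline{s_it_i'}$ creates no crossing.

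The main work is Case (3b). Here I would argue as follows. Let $j$ be the motorcycle whose track $\overline{s_jt_j}$ is first hit by $\overline{t_iq}$, and let the hitting point be $\chi_{ij}$ (it is the crossing point of $\ell_i$ and $\ell_j$, since $j$'s track lies on $\ell_j$). After the event, the new tentative track of $i$ is $\overline{s_it_i'}$ with $t_i'=h(t_i,\chi_{ij})\in\overline{t_i\chi_{ij}}$, which is a subsegment of $\overline{s_i\chi_{ij}}$ and hence of $\overline{s_iq}$; in particular it is contained in $\overline{s_it_i}\cup\overline{t_i\chi_{ij}}$, where the second part is the portion of ray $(t_i,\vec v_i)$ strictly before the first hit. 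So the only tentative track that $\overline{s_it_i'}$ could possibly cross is $\overline{s_jt_j}$, and it meets $\ell_j$ only at $\chi_{ij}$, which is now an \emph{endpoint} of $i$'s tentative track — hence not in its relative interior, so there is no crossing in the sense defined. Symmetrically, if the crossing point $\chi_{ij}$ lies in the relative interior of $j$'s tentative portion $\overline{c_jt_j}\setminus\{c_j\}$, we also retract $j$ to $t_j'=h(c_j,\chi_{ij})\in\overline{c_j\chi_{ij}}$, so $j$'s new tentative track $\overline{s_jt_j'}\subseteq\overline{s_jt_j}$ shrinks and cannot create a crossing with anything. If instead $\chi_{ij}\in\overline{s_jc_j}$ (it lies in $j$'s confirmed track), then $j$'s tentative track is unchanged, but then $\chi_{ij}$ is still an endpoint of $i$'s new tentative track, so again no crossing with $\overline{s_jt_j}$ arises. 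The remaining point to check is that $\overline{s_it_i'}$ does not cross $\overline{s_jt_j}$ anywhere \emph{other} than near $\chi_{ij}$: since both are subsegments of distinct, non-parallel lines $\ell_i,\ell_j$, they can meet in at most the single point $\chi_{ij}$, which we have just handled.

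The subtle step — and the one I expect to require the most care — is verifying that retracting $i$ and $j$ does not inadvertently leave \emph{some other pair} of tracks crossing, or create a crossing that was not there before because a track got longer. Getting longer only happens to $i$'s track in Case (3a), which is covered by the ray-shooting guarantee, and to $i$'s track in Case (3b), which is covered by the observation that its extension stops strictly before the first hit and terminates exactly at the boundary point $\chi_{ij}$. I would also note that one must be careful about the order of updates within a single event: the ray-shooting data structure is queried against the \emph{old} tentative tracks, and the invariant is re-established only after both retractions are performed; so the argument is that the \emph{new} configuration of tentative tracks differs from the old one only in that $\overline{s_it_i}$ and $\overline{s_jt_j}$ have each been replaced by subsegments of $\overline{s_it_i}\cup\overline{t_i\chi_{ij}}$ and $\overline{s_jt_j}$ respectively, and the only place these two modified segments could meet is $\chi_{ij}$, which is an endpoint of at least one of them. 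A brief case analysis on whether $\chi_{ij}$ falls in $j$'s confirmed or tentative part, combined with the fact that two non-parallel lines meet in one point, then closes the induction.
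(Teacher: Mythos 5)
Your inductive framing is natural and your handling of Cases (1), (2), (3a), and the retraction of $j$ in Case (3b) is essentially right, but the proof has a genuine gap at precisely the step you flag as needing the most care. When $i$'s tentative track is extended from $\overline{s_it_i}$ to $\overline{s_it_i'}$, the point $t_i$ changes status: it was an \emph{endpoint} of the old track (so excluded from its relative interior) and becomes an \emph{interior} point of the new one. Hence the fact that neither $\overline{s_it_i}$ nor the added piece $\overline{t_it_i'}$ crosses any other tentative track does not imply that their union $\overline{s_it_i'}$ avoids a crossing: a third track $\overline{s_kt_k}$ with $k\neq i,j$ could pass through $t_i$ in its relative interior without crossing either piece, yet cross the union at $t_i$. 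Your claim that ``the only tentative track that $\overline{s_it_i'}$ could possibly cross is $\overline{s_jt_j}$'' is therefore not justified, and this is exactly where the hard part of the lemma lies; the observation that two non-parallel lines meet in one point does not reach it.

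The paper's proof (by contradiction, on the first event creating a crossing) isolates this configuration explicitly. One first shows that any new crossing with a track $\overline{s_kt_k}$ must occur at $t_i$, so $t_i=\chi_{ik}$; since $t_i$ is a target point and a crossing point can only be a $\chi$-target (by general position and because halving queries never return crossing points), $\chi_{ik}$ was created in a prior Case (3b) step. As we are now in Case (3) rather than Case (2), $\chi_{ik}$ is not in $k$'s confirmed track, which means it was pushed into \emph{both} $S_i$ and $S_k$ when created and is still in $S_k$. That forces $t_k=\Top(S_k)$ to lie at or before $\chi_{ik}$ along $\ell_k$, so $\chi_{ik}$ is not in the relative interior of $\overline{s_kt_k}$ --- contradiction. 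This stack-membership argument is essential and is entirely absent from your write-up. A minor inaccuracy as well: $\chi_{ij}$ is \emph{not} an endpoint of $i$'s retracted tentative track; the endpoint is $t_i'=h(t_i,\chi_{ij})$, which by the definition of a halving query is not a crossing point and is therefore strictly before $\chi_{ij}$, so $\chi_{ij}$ is simply not on $\overline{s_it_i'}$ at all.
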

\begin{proof}
For sake of contradiction, assume that two tentative tracks cross 
during the course of the algorithm. 
Let $(i,t_i)$ be the first event that generates such a
crossing, so just before processing this event, the tentative
tracks $\overline{s_jt_j}$, $j=1,\dots,n$ do not cross, and 
there is a crossing among the tracks $\overline{s_jt'_j}$.
We must be in Case (3), because we do not extend any
tentative track in cases (1) and (2).
Besides, we only extend the track of motorcycle $i$ in Case (3).
So there must be another motorcycle $k \neq i$ such that
$\overline{s_it'_i}$ crosses $\overline{s_kt_k}$.
 
In Case (3a), the segment
$\overline{t_iq}$ obtained by ray shooting does not cross any
tentative track $\overline{s_jt_j}$, $j\neq i$, 
and since $t'_i=q$, then the new portion $\overline{t_it'_i}$
of the track does not cross any other tentative track. The same
is true in Case (3b), because $t'_i$ is in $\overline{t_i\chi_{ij}}$,
where track $j$ is the first track hit by $\overline{t_iq}$.
So we just proved that, in any case, the new portion $\overline{t_it'_i}$ 
of the track does not cross any track $\overline{s_jt_j}$, $j\neq i$,
and in particular, $\overline{t_it'_i}$ does not cross $\overline{s_kt_k}$.

By our assumption, we also know that $\overline{s_kt_k}$ cannot
cross $\overline{s_it_i}$. So the only remaining possibility is that
$\overline{s_kt_k}$ crosses $\overline{s_it_i'}$ at $t_i$.
Then $t_i$ is the crossing point $\chi_{ik}$. 
This point $t_i=\chi_{ik}$ cannot be in the confirmed track $\overline{s_kc_k}$,
because that would be Case (2) of our algorithm, and we showed that 
we are in Case (3).  
Since $\chi_{ik}$ is a $\chi$-target of $i$, 
and it does not lie in the confirmed track of $k$,
then it must have been inserted at the same time
in $S_i$ and $S_k$ while processing a previous event.
Since $\chi_{ik}$ is not on the confirmed track of $k$,
then it must still be in $S_k$. 
So the  tentative track $\overline{s_kt_k}$
cannot contain $\chi_{ik}$ in its relative interior, a contradiction.
\end{proof}

We want to argue that our algorithm computes the motorcycle
graph correctly. So assume it is not the case. As our algorithm moves
motorcycles forward until they either reach their destination point or crash, 
it could only fail if during the execution of our algorithm, 
the confirmed track of at least one motorcycle $i$ goes beyond the  
point $\kappa_i$  where it is supposed to crash in the motorcycle graph. 
Let us consider the event $(i,t_i)$ that is first 
processed by our algorithm, such that motorcycle $i$ goes beyond $\kappa_i$.
So $\kappa_i$ is in the segment $\overline{c_it_i} \setminus \{t_i\}$.
Let $j$ denote the motorcycle that $i$ crashes into, in the 
(correct) motorcycle graph, so $\kappa_i=\chi_{ij}$.

When we process $(i,t_i)$, in the current graph constructed by
our algorithm, motorcycle $j$ cannot have reached $d_j$, because
it would mean that the tentative tracks $\overline{s_it_i}$ and
$\overline{s_jd_j}$ are crossing at $\kappa_i=\chi_{ij}$, which is impossible
by Lemma~\ref{lem:invariant}.

We now rule out the case where, when our algorithm processes
$(i,t_i)$, motorcycle $j$ has already crashed into some motorcycle $k$ in
the graph constructed by the algorithm. (See \figurename~\ref{fig:correctness2}.)
For sake of contradiction, assume it did happen. 
\begin{figure}
\begin{center}
\includegraphics{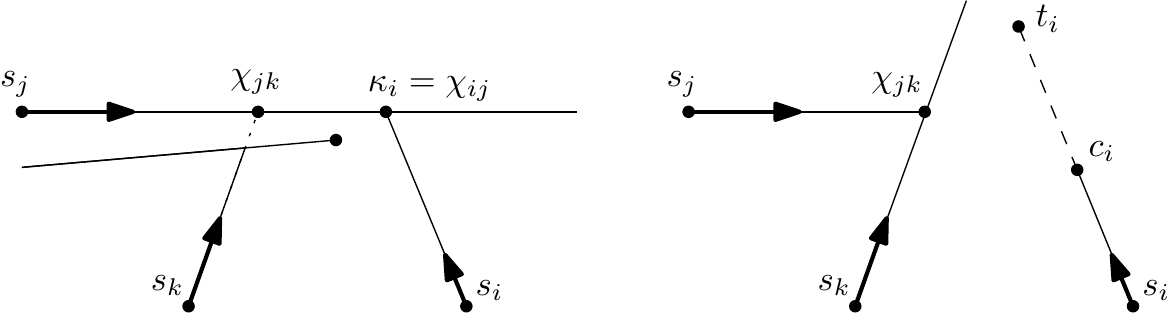}
\end{center}
\caption{The motorcycle graph (left) and an incorrect computation (right).
\label{fig:correctness2}}
\end{figure}
\begin{itemize}
\item If we had $i=k$, then $\chi_{ij}$ would have been created
as a $\chi$-target for $j$ earlier. At this point, $i$
had not gone past $\chi_{ij}$, because $(i,t_i)$ is the first
such event. As $\tau(j,\chi_{ij})<\tau(i,\chi_{ij})$, the
algorithm would have moved $j$ to $\chi_{ij}$ before $i$
moves further, and thus $j$ would not crash at $\chi_{ij}$,
a contradiction.
\item Thus we must have $i \neq k$. 
As $t_i$ is beyond $\kappa_i=\chi_{ij}$,
and tentative tracks cannot cross, we must have 
$c_j \in \overline{s_j \chi_{ij}}$. So $j$ crashed into $k$ at
$\chi_{jk} \in \overline{s_j\chi_{ij}}$. As in the correct
motorcycle graph, $j$ does not crash into $k$, it means that
the algorithm has already moved $k$ past its (correct) crashing
point, which contradicts our assumption that $(i,t_i)$ was the
first such event.
\end{itemize}

We just proved that $j$ has not stopped or crashed when
the algorithm processes event $(i,t_i)$, so at this point
there should be an event $(j,t_j)$ in the queue.
By Lemma~\ref{lem:invariant}, the tracks $\overline{s_it_i}$ 
and $\overline{s_jt_j}$
cannot cross, so we must have $t_j \in \overline{c_j\kappa_i}$. 
(See \figurename~\ref{fig:correctness}.) 
\begin{figure}
\begin{center}
\includegraphics{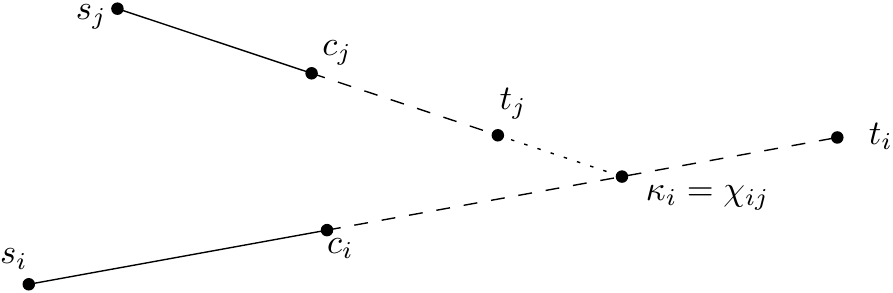}
\end{center}
\caption{Proof of correctness, remaining case.\label{fig:correctness}}
\end{figure}
It implies that 
$\tau(j,t_j) \leq \tau(j,\kappa_i)$. But since $i$ crashes 
into $j$ in the (correct)
motorcycle graph, we must have $\tau(j,\kappa_i)<\tau(i,\kappa_i)$, thus
$\tau(j,t_j) < \tau(i,\kappa_i)$. As $\kappa_i \in \overline{c_it_i}$, we have
$\tau(i,\kappa_i)\leq \tau(i,t_i)$, thus $\tau(j,t_j) < \tau(i,t_i)$.
But this is impossible, because our algorithm always processes the
earliest available event, so it would have processed 
$(j,t_j)$ rather than $(i,t_i)$.

\subsection{Analysis}

Our algorithm uses two auxiliary data structures: for answering halving 
queries, and for ray shooting. The running time of our algorithm depends
on their preprocessing time and query time. Let $P(n)$ denote an
upper bound on the preprocessing
time of these two data structures, and let $Q(n)$ denote an upper
bound on the time needed for a 
query or update---so we can answer a ray-shooting query or a halving 
query in time $Q(n)$, and we can update the ray shooting
data structure in time $Q(n)$. We now prove the following result:
\begin{theorem}\label{th:analysis}
We can compute a motorcycle graph of size $n$ in time 
$O(P(n)+n(Q(n)+\log n)\log n)$.
\end{theorem}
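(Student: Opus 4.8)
The plan is to bound the total running time by accounting separately for the cost of processing events and the cost of maintaining the two auxiliary data structures. The preprocessing contributes $O(P(n))$, and every query or update costs $O(Q(n))$, so the whole analysis reduces to bounding the number of events processed, since each event triggers only $O(1)$ ray-shooting queries, $O(1)$ halving queries, $O(1)$ ray-shooting-structure updates, and $O(\log n)$ work on the priority queue $\mathcal Q$ and the stacks $S_i$. Thus if $N$ denotes the total number of events, the running time is $O(P(n) + N(Q(n) + \log n))$, and it remains to show $N = O(n\log n)$.

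To bound $N$, I would bound the total number of target points ever created, since each event corresponds to a motorcycle reaching a target point, and a target point, once reached, is popped from its stack and never reinserted (it lies on the confirmed track thereafter). The initial target points number $2n$. New target points are created only in Case (3b): for each such event we create a $\chi$-target $\chi_{ij}$ and an $h$-target $h(t_i,\chi_{ij})$ for motorcycle $i$, and possibly a $\chi$-target $\chi_{ij}$ and an $h$-target $h(c_j,\chi_{ij})$ for motorcycle $j$. So it suffices to bound the number of Case-(3b) events. The key is the defining property of a halving query: when we shorten the tentative track of $i$ by replacing its endpoint target with $t'_i = h(t_i,\chi_{ij})$, the size $|s_it'_i|$ of the new tentative track is at most a constant fraction closer to — well, more precisely, $|t_ih(t_i,\chi_{ij})| \le \lceil \rho|t_i\chi_{ij}|\rceil$, and crucially it is strictly smaller than $|t_i\chi_{ij}|$ whenever the latter is $\ge 2$. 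Since $\chi_{ij}$ lies strictly beyond $t_i$ on the tentative track, $|t_i\chi_{ij}| \ge 1$, so the new portion $\overline{t_it'_i}$ has size strictly less than the size of $\overline{t_iq}$ where $q$ was the previous target. The plan is to define, for each motorcycle $i$, a potential such as the size $|c_i\Top(S_i)|$ of the gap between the confirmed track end and the next target (or a sum over consecutive target points of $\lceil\log_2(\text{gap}+1)\rceil$), and argue that each Case-(3b) halving decreases this potential by at least a constant, while the only operations that increase it are the $O(n)$ initializations, each contributing $O(\log n)$. That gives $O(n\log n)$ Case-(3b) events on the $i$-side; the symmetric argument applied to motorcycle $j$ handles the other half.

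The main obstacle, and the place where care is needed, is making the potential argument airtight across the interaction of the two stacks $S_i$ and $S_j$ in Case (3b): when we insert $\chi_{ij}$ and $t'_j = h(c_j,\chi_{ij})$ into $S_j$, we are subdividing the gap between $c_j$ and the old $\Top(S_j) = t_j$, so I must check that this subdivision (i) strictly decreases $j$'s potential by a constant and (ii) does not accidentally increase it — this follows because $\chi_{ij} \in \overline{c_jt_j}$, so both new target points lie inside the old gap, and the halving property $|c_jt'_j|, |t'_j\chi_{ij}| < |c_j\chi_{ij}| \le |c_jt_j|$ together with the fact that a gap of size $g$ splits into pieces summing (with the $\lceil\log_2(\cdot+1)\rceil$ weighting) to strictly less than the weight of $g$ gives the decrease. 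A secondary subtlety is ruling out the creation of the same $\chi$-target twice for the same motorcycle: once $\chi_{ij}$ is a target of $i$ it stays in $S_i$ until motorcycle $i$ reaches it and pops it, after which $i$ has crashed or passed it and Case (3b) will never again shoot a ray from a point before $\chi_{ij}$; this is essentially the content already established around Lemma~\ref{lem:invariant} and the correctness proof, so I would cite it rather than reprove it. Finally, one must confirm that the priority queue holds at most $n$ events at a time and each stack operation is $O(\log n)$ — both already noted in the algorithm description — so the per-event overhead beyond the $O(Q(n))$ query cost is indeed $O(\log n)$, yielding the claimed $O(P(n) + n(Q(n)+\log n)\log n)$ bound.
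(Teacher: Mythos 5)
Your overall reduction matches the paper's: per-event cost is $O(Q(n)+\log n)$, so it suffices to bound the number of events, which is at most the number of target points created, and since each Case-(3b) event creates $O(1)$ new target points it suffices to bound the number of Case-(3b) events. Where you diverge — and where the argument breaks — is the potential function you use to bound the Case-(3b) events.

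You propose the potential $\Phi_i=\sum_{\text{gaps}}\lceil\log_2(\text{gap}+1)\rceil$ over consecutive target points in $S_i$, and claim that a Case-(3b) subdivision decreases $\Phi_i$. This is false: $x\mapsto\log_2(x+1)$ is concave with value $0$ at $0$, hence \emph{subadditive}, so when a gap of size $g$ is split by inserting two points into pieces of sizes $g_1,g_2,g_3$ with $g_1+g_2+g_3\le g$, one gets $\lceil\log_2(g_1{+}1)\rceil+\lceil\log_2(g_2{+}1)\rceil+\lceil\log_2(g_3{+}1)\rceil\ge\lceil\log_2(g{+}1)\rceil$, i.e.\ the potential can only increase. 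Concretely, $g=100$ split into $20,30,50$ has weight $7$ before and $5+5+6=16$ after. So your explicit claim that ``a gap of size $g$ splits into pieces summing to strictly less than the weight of $g$'' is the wrong inequality, and the amortization cannot close. The single-gap variant $|c_i\Top(S_i)|$ (log-weighted or not) fails for a different reason: it jumps up arbitrarily when the motorcycle pops a target and the next element of $S_i$ is far away, precisely the situation that needs to be charged. A correct potential would have to avoid charging the far third piece $\overline{\chi_{ij}q}$ at all, which is where the difficulty lies.

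The paper's argument is structurally different and does not use a potential or bound Case-(3b) events directly. It bounds the total number of $\chi$-targets via a case split: each $\chi$-target $\chi_{ij}$ either eventually corresponds to an actual crash (at most $n$ of these), or one of $i,j$ never reaches it, in which case it remains in that motorcycle's stack at termination. Lemma~\ref{lem:targets} then shows that at termination each $S_i$ contains only $O(\log n)$ $\chi$-targets, by observing that the associated segments $\overline{g_j\chi_j}$ form a nested family whose sizes shrink geometrically — this is where the halving property is actually exploited, inside a \emph{single} stack's final state, not across subdivisions over time. Total: $O(n)+n\cdot O(\log n)=O(n\log n)$. Your intuition about geometric shrinking is the right one, but it needs to be applied to the nested end-state family rather than to a subdivision potential; without that (or some other device to charge the far pieces to crashes and to end-state stacks), the $O(n\log n)$ bound is not established.

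Two minor remarks. First, your claim that a target point once reached ``is popped from its stack and never reinserted'' is true, but the reason duplicates never arise for $\chi$-targets is not Lemma~\ref{lem:invariant} by itself — it is the pseudocode's explicit membership check (Line~\ref{line:check_4}); the informal Section~\ref{sec:algorithm} description glosses over this. Second, your bound of $2n$ initial target points and $O(\log n)$ priority-queue overhead per event are fine and match the paper.
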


Each time we handle an event, we perform at most two halving queries, 
one ray-shooting
query, and we may update two tentative tracks in the ray-shooting data
structure. We also pay an $O(\log n)$ time overhead to update 
the priority queue $\mathcal Q$. So after preprocessing, the running time will 
be at most  the number of events times $Q(n)+\log n$. Thus we only need to
argue that our algorithm processes a total of $O(n \log n)$ events. In fact,
at each event we process, a motorcycle reaches a target point, so we only 
need to show
that $O(n \log n)$ target points are created during the course of the algorithm.

Initially, we create $O(n)$ target points, which are $s_i,d_i$  for $i=1,\dots,n$.
After this, we only  create new target points in Case (3b) of the algorithm.
In this case, we create one $\chi$-target,
and at most two $h$-targets obtained by halving. Thus we only
need to bound the number of $\chi$-targets. At the end of the algorithm, some of
these $\chi$-targets $\chi_{ij}$ correspond to an actual crash, with motorcycle $i$
crashing into $j$, or $j$ crashing into $i$. In any case, there are at most
$n$ such $\chi$-targets. We need to consider the other $\chi$-targets, that do
not correspond to an actual crash. In this case, either motorcycle $i$
or $j$ does not reach $\chi_{ij}$, so at the end of the computation, 
$\chi_{ij}$ must appear in the stack $S_i$
or $S_j$ of target points that have not been reached by motorcycle $i$ or
$j$, respectively. Thus, in order
to complete the proof of Theorem~\ref{th:analysis}, we only need the following
lemma.
\begin{lemma}\label{lem:targets}
At the end of the execution of our algorithm, for any motorcycle $i$,
the number of $\chi$-targets in $S_i$ is $O(\log n)$.
\end{lemma}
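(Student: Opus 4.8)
The plan is to track, over the course of the algorithm, how the $\chi$-targets accumulate in a single stack $S_i$, and to exploit the halving mechanism to show that the sizes of the segments between consecutive confirmed positions shrink geometrically. The key observation is that every time a $\chi$-target $\chi_{ij}$ is inserted into $S_i$ in Case (3b), it is inserted together with an $h$-target $t'_i = h(t_i,\chi_{ij})$, where $t_i$ is the current tentative endpoint. Since $t_i = \Top(S_i)$ at that moment, the new tentative endpoint $t'_i$ lies strictly between the old tentative endpoint $t_i$ and the new $\chi$-target, and by the halving guarantee we have $|t_i t'_i| \leq \lceil \rho\, |t_i \chi_{ij}|\rceil$. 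I would first set up notation: fix a motorcycle $i$, and consider the state of $S_i$ at the end of the algorithm, listing its elements from $c_i$ outward as $p_1, p_2, \dots, p_m$ (with $d_i = p_m$). I want to show that the number of $p_k$ that are $\chi$-targets is $O(\log n)$.

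The crucial structural claim is that between any two consecutive $\chi$-targets currently in $S_i$, there is an $h$-target of $S_i$ separating them, \emph{and} that the presence of this $h$-target forces the ``remaining size'' to roughly halve. More precisely: whenever a $\chi$-target $\chi_{ij}$ is pushed onto $S_i$, the element of $S_i$ immediately before it (closer to $c_i$) is the $h$-target $t'_i = h(t_i,\chi_{ij})$ pushed at the same time; and the size $|t'_i\, \chi_{ij}| = |t_i \chi_{ij}| - |t_i t'_i| \geq |t_i\chi_{ij}| - \lceil\rho|t_i\chi_{ij}|\rceil$, while also $|t'_i\chi_{ij}|$ and $|t_i t'_i|$ are both strictly less than $|t_i\chi_{ij}|$ when $|t_i\chi_{ij}|\geq 2$. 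I would argue that, reading the final stack $S_i$ from $c_i$ towards $d_i$, the sizes of the gaps $|p_k p_{k+1}|$ between a $\chi$-target $p_k$ and the next target $p_{k+1}$ form a sequence in which passing from one $\chi$-target to the next-but-one decreases the relevant ``enclosing size'' by a factor bounded away from $1$. Since the largest possible size of any segment on $\ell_i$ is at most $n-1$ (there are at most $n-1$ crossing points $\chi_{ij}$ on $\ell_i$), a geometric decrease by a constant factor can happen only $O(\log n)$ times before the size drops below $2$; once a gap has size $1$, it contains no crossing point in its relative interior and hence can harbor no further $\chi$-target via subsequent halvings. This yields the $O(\log n)$ bound.

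I expect the main obstacle to be a clean bookkeeping argument that the stack discipline (LIFO order of $S_i$, with $\Top(S_i)$ always the closest unreached target) really does nest the $\chi$-targets and their companion $h$-targets so that the geometric-shrinking picture is valid. In particular one must check that a $\chi$-target, once pushed, is never ``skipped'': the motorcycle reaches the $h$-target just below it before it can reach the $\chi$-target, and when it does reach that $h$-target (Case (3) again), any further insertions happen strictly beyond it, so the interval between a surviving $\chi$-target and the next one is always a sub-interval of a previously-halved interval. I would formalize this by an invariant maintained across events: for every $\chi$-target currently in $S_i$, the sub-segment of $\overline{c_i d_i}$ from the target immediately preceding it to $d_i$ (or to the next $\chi$-target) has size at most $\lceil\rho\rceil^{k}$-ish after $k$ intervening halvings — the precise statement being that consecutive $\chi$-targets in $S_i$, enumerated from $c_i$ to $d_i$, are separated by strictly decreasing enclosing sizes that drop by a constant factor every step or two. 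Combined with the size bound $n-1$ and the stopping condition at size $<2$, the count of $\chi$-targets in $S_i$ is $O(\log_{1/\rho} n) = O(\log n)$, which is what the analysis of Theorem~\ref{th:analysis} requires.
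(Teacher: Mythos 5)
Your high-level picture is right — the halving mechanism forces a geometric decay, a segment of $\ell_i$ contains at most $n-1$ crossing points, and once the relevant size drops below $2$ no further $\chi$-target can be inserted — but the proposal stops short of identifying the correct quantity to track, and that is exactly where the proof lives.

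Two concrete issues. First, the inequality you write, $|t'_i\chi_{ij}| \geq |t_i\chi_{ij}| - \lceil\rho|t_i\chi_{ij}|\rceil$, is a \emph{lower} bound on the remaining size, which is useless here; what the halving guarantee actually supplies and what you need is the \emph{upper} bound $\max\{|t_it'_i|,|t'_i\chi_{ij}|\}\leq\lceil\rho|t_i\chi_{ij}|\rceil$. Second, and more importantly, you focus on the ``gaps $|p_kp_{k+1}|$'' between consecutive elements of the \emph{final} stack $S_i$, but these gaps are not what the halving operation acts on and they do not obviously nest or shrink. The right quantity is $|g_j\chi_j|$, where $g_j$ is the \emph{confirmed endpoint of motorcycle $i$ at the moment $\chi_j$ was pushed} (this is either $c'_i$ if $i$ was the motorcycle being processed, or $c_i$ if another motorcycle's event triggered the insertion). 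That is the segment actually passed to the halving query, and its companion $h$-target is $h_j=h(g_j,\chi_j)$. Listing the $\chi$-targets surviving in $S_i$ in order of creation $\chi_1,\dots,\chi_m$ (equivalently, from $d_i$ inward toward $c_i$), one shows $\overline{g_j\chi_j}\subset\overline{g_{j-1}\chi_{j-1}}$ and, crucially, that $\overline{g_j\chi_j}$ lies entirely inside \emph{one half} of $\overline{g_{j-1}\chi_{j-1}}$: if $h_{j-1}$ is still in $S_i$ then $\overline{g_j\chi_j}\subset\overline{g_{j-1}h_{j-1}}$, and if $h_{j-1}$ has already been popped then motorcycle $i$ must have passed $h_{j-1}$ before $\chi_j$ was created, forcing $\overline{g_j\chi_j}\subset\overline{h_{j-1}\chi_{j-1}}$. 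Either way $|g_j\chi_j|\leq\lceil\rho|g_{j-1}\chi_{j-1}|\rceil$, the sizes decay geometrically, $|g_1\chi_1|\leq n$, and $|g_{m-1}\chi_{m-1}|\geq 2$ (it contains both $\chi_m$ and $\chi_{m-1}$), so $m=O(\log n)$. You flag this case analysis as ``the main obstacle'' and sketch an invariant, but you do not carry it out, and in particular the two-case dichotomy on whether $h_{j-1}$ survives is the step that makes the nesting claim true; without it the argument does not close.
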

\begin{proof}
In this proof, we only consider the status of the stack $S_i$
at the end of the algorithm, and we assume that it contains more
than one $\chi$-target. We denote by $\chi_1,\dots,\chi_m$
the $\chi$-targets in $S_i$, in reverse order, so
$\chi_m\dots\chi_2\chi_1$ is a subsequence of $S_i$,
where $\chi_m$ is closest to $\kappa_i$ and $\chi_1$ is
closest to $d_i$.

Each target $\chi_j$ was created in case (3b) of our
algorithm. At the same time, an $h$-target $h_j=h(g_j,\chi_j)$ was created
by a halving query using another target point $g_j$.
As the points $\chi_j$, $j=1,\dots,m$ are in $S_i$, motorcycle $i$
never reaches these points during the course of the algorithm,
so $\chi_1$ and $h_1$ must have been created first, then 
$\chi_2$ and $h_2$ \dots and finally $\chi_m$ and $h_m$.

For any $2\leq j \leq m$, as $\chi_j$ is created after $\chi_{j-1}$,
and these two points are created when motorcycle $i$ reaches
$g_j$ and $g_{j-1}$, respectively, it implies that $g_{j-1}$ is
in $\overline{s_ig_{j}}$. We also know that 
$\chi_{j-1}$ lies in $\overline{\chi_jd_i}$, because $\chi_{j-1}$
appears after $\chi_j$ in $S_i$. So $\overline{g_j\chi_j}$, $j=1,\dots,m$ 
is a sequence of nested segments, that is, we have 
$\overline{g_j\chi_j} \subset \overline{g_{j-1}\chi_{j-1}}$
for all $2 \leq j \leq m$. More precisely:
\begin{itemize}
\item If $h_{j-1}$ is in $S_i$, 
then $\overline{g_j\chi_j} \subset \overline{g_{j-1}h_{j-1}}$,
because $\chi_j$ is created after $h_{j-1}$, and
motorcycle $i$ never reaches $h_{j-1}$. 
(See \figurename~\ref{fig:targets1}.)
\item If $h_{j-1}$ is not in $S_i$, then 
$\overline{g_j\chi_j} \subset \overline{h_{j-1}\chi_{j-1}}$.
(See \figurename~\ref{fig:targets2}.)
It can be proved as follows. As $h_{j-1}$ is created at
the same time as $\chi_{j-1}$, then $\chi_j$ is created
after $h_{j-1}$. So $\chi_j$ must have been created
after motorcycle $i$ reaches $h_{j-1}$, otherwise
we would have $\chi_j \in \overline{s_ih_{j-1}}$,
and since motorcycle $i$ reaches $h_{j-1}$ later,
$\chi_j$ would not be in $S_i$.
As $\chi_j$ is created after motorcycle $i$ reaches $h_{j-1}$,
we must have $g_j \in \overline{h_{j-1}\chi_{j-1}}$.
\end{itemize}
\begin{figure}
\begin{center}
\includegraphics[width=.9\textwidth]{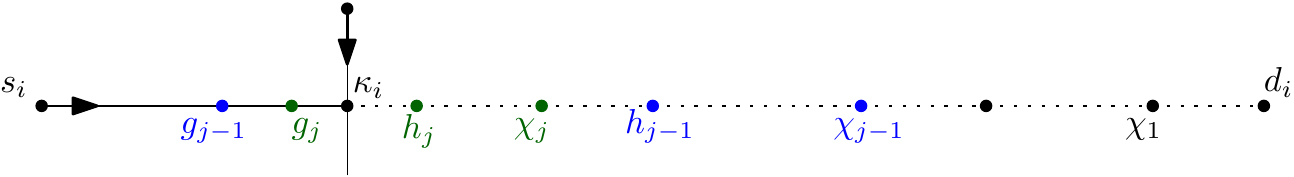}
\end{center}
\caption{Proof of Lemma~\ref{lem:targets}, first case.\label{fig:targets1}}
\end{figure}
\begin{figure}
\begin{center}
\includegraphics[width=.9\textwidth]{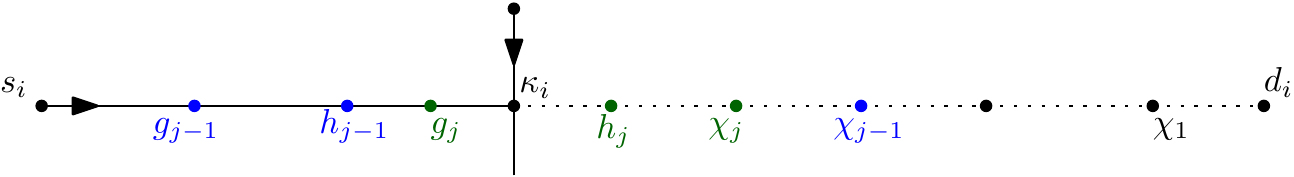}
\end{center}
\caption{Proof of Lemma~\ref{lem:targets}, second case.\label{fig:targets2}}
\end{figure}

Thus $\overline{g_j\chi_j}$ is contained in
either $\overline{g_{j-1}h_{j-1}}$ or $\overline{h_{j-1}\chi_{j-1}}$,
and since $h_{j-1}=h(g_{j-1},\chi_{j-1})$, it follows that the
size $|g_j\chi_j|$ decreases exponentially when $j$ increases from 1 to $m$. As
$\overline{g_{m-1}\chi_{m-1}}$ contains $\chi_{m}$ and $\chi_{m-1}$,
we have $|g_{m-1}\chi_{m-1}| \geq 2$. In addition,
$|g_1\chi_1|\leq n$, so we must have $m=O(\log n)$.

\end{proof}

\section{Auxiliary Data Structures}

Our algorithm, presented in Section~\ref{sec:algorithm}, requires
two auxiliary structures. The first one is simply a ray-shooting
data structure. As ray shooting is a standard operation in
computational geometry, we will be able to directly use known
data structures. The second data structure we need is for answering 
halving queries. We show below how to construct efficient data 
structures for this type of queries, and the corresponding time
bounds for our motorcycle graph algorithm. 

\subsection{General case}\label{sec:general} 
In this section, we present the auxiliary data structures
for the most general case, as presented in Section~\ref{sec:notation}.
So motorcycles have arbitrary starting position, destination point and
velocity.

For ray shooting, we can directly use a data structure by Agarwal
and Matou{\v{s}}ek~\cite{agarwal1993ray}, which requires preprocessing
time $O(n^{4/3+\eps})$, with update and query time $O(n^{1/3+\eps})$, for any
$\eps>0$.

For halving queries, we use known range
searching data structures and parametric search, as in the work
of Agarwal and Matou{\v{s}}ek on ray shooting: Our problem is  
an optimization version of range counting in an arrangement of lines, 
so we obtain the same bounds~\cite[Section 3.1]{agarwal1993ray}.
\begin{lemma}\label{lem:halving}
Given the $n$ supporting lines $\ell_1,\dots,\ell_n$, we can construct
a data structure with $O(n^{4/3+\eps})$ preprocessing time
and $O(n^{1/3+\eps})$ query time that answers the following queries $(i,p,q)$.
Assume there are $k$ crossing points $\chi_{ij}$
on $\overline{pq}$. Then we return the median crossing point and
the next: the
$\lceil k/2\rceil$th and the $(\lceil k/2\rceil+1)$th such crossing
point, in the ordering from $p$ to $q$ along $\overline{pq}$.
\end{lemma}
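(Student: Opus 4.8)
The plan is to reduce a halving query to a counting query in the arrangement of the supporting lines, to answer that counting query with a range-searching structure set to the space/time tradeoff point $m=n^{4/3}$, and then to turn ``counting'' into ``selection'' by parametric search --- exactly as Agarwal and Matou\v{s}ek~\cite{agarwal1993ray} do for planar ray shooting, with ``the first line hit'' replaced by ``the line of rank $\lceil k/2\rceil$''. For the reduction: by the general-position assumptions of Section~\ref{sec:notation}, the crossing points $\chi_{ij}$ on $\overline{pq}$ are in bijection with the lines $\ell_j$, $j\neq i$, that meet $\overline{pq}$, and their order along $\overline{pq}$ (from $p$ to $q$) is the order in which those lines are met. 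For $x\in\overline{pq}$ let $N(x)$ be the number of lines $\ell_j$, $j\neq i$, meeting $\overline{px}$; then $N$ is monotone nondecreasing as $x$ runs from $p$ to $q$ and jumps by exactly $1$ at each $\chi_{ij}$. Hence one counting query with $x=q$ gives $k=N(q)$, and for $1\leq r\leq k$ the $r$th crossing point on $\overline{pq}$ is the unique place where $N$ jumps from $r-1$ to $r$. So it is enough to (i) build a structure that counts the lines $\ell_j$ met by a query segment with one endpoint on $\ell_i$, and (ii) on top of it, locate the $r$th jump of $N$ for $r=\lceil k/2\rceil$ and for $r=\lceil k/2\rceil+1$.

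For (i) I would dualize: a point $z$ becomes a line $z^{*}$ and a line $\ell$ becomes a point $\ell^{*}$, and $\ell$ meets the relative interior of $\overline{px}$ iff $\ell^{*}$ lies in the open double wedge bounded by $p^{*}$ and $x^{*}$. A double wedge is a Boolean combination of $O(1)$ halfplanes, so $N(x)$ is obtained from $O(1)$ halfplane range-counting queries on the $n$ dual points $\ell_1^{*},\dots,\ell_n^{*}$. For halfplane (hence simplex) range counting in the plane I would use Matou\v{s}ek's structure in tradeoff form: for every $\eps>0$ and every $m$ with $n\leq m\leq n^{2}$ there is a structure that uses $O(m^{1+\eps})$ space, takes $O(m^{1+\eps})$ time to build, and answers a query in $O(n^{1+\eps}/\sqrt{m})$ time; taking $m=n^{4/3}$ yields $O(n^{4/3+\eps})$ preprocessing and $O(n^{1/3+\eps})$ per counting query --- the same bounds as the Agarwal--Matou\v{s}ek planar ray-shooting structure invoked in Section~\ref{sec:general}.

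For (ii) I would apply parametric search to the decision problem ``$N(x(\lambda))\geq r$?'', where $x(\lambda)$ parametrizes $\overline{pq}$ by a scalar $\lambda$ on an interval. Every comparison made while answering this decision is a sign test of a degree-$1$ function of $\lambda$ (one ``which side of a line'' test per node of the counting structure visited), and the counting query parallelizes to $(\log n)^{O(1)}$ depth with $O(n^{1/3+\eps})$ total work; the usual parametric-search overhead is therefore absorbed into the $n^{\eps}$ factor, and the selection costs $O(n^{1/3+\eps})$. (Alternatively, avoiding parametric search, one searches directly through the hierarchical cutting underlying the counting structure: trace $\overline{pq}$ level by level through the cells, use recursive counting on each cell's conflict list to see where the running rank first reaches $r$, and descend into that single cell; with a constant branching parameter this stays $O(n^{1/3+\eps})$.) Running the selection once with $r=\lceil k/2\rceil$ and once with $r=\lceil k/2\rceil+1$ returns the two lines $\ell_j$, hence the two crossing points $\chi_{ij}=\ell_i\cap\ell_j$, as required.

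The one genuinely delicate part is step (ii): one must verify the hypotheses under which parametric search incurs only a polylogarithmic overhead --- bounded comparison degree, which is immediate, and a shallow parallel decision procedure --- or, in the hierarchical-cutting variant, check that the recursive trace really isolates a single crossing point at cost $O(n^{1/3+\eps})$. The rest is routine and does not affect the bounds: the open/closed bookkeeping for lines through $p$, $x$, or $q$ in the double-wedge decomposition (handled by general position, or by symbolic perturbation), and the mismatch between Lemma~\ref{lem:halving}, which reports two crossing points, and the halving query of Section~\ref{sec:notation}, which asks for a non-crossing splitter --- here the caller simply returns the midpoint of the two reported points, which is not a crossing point and induces sizes $\lceil k/2\rceil$ and $k-\lceil k/2\rceil$, both strictly less than $k$ once $k\geq2$.
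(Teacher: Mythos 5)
Your proposal follows the same route as the paper: the paper's proof is just the two-line remark that halving is an optimization version of range counting in a line arrangement, solved by Matou\v{s}ek-style range searching plus parametric search as in Agarwal--Matou\v{s}ek~\cite[Section 3.1]{agarwal1993ray}, and your dual double-wedge counting at the $m=n^{4/3}$ tradeoff point followed by parametric search to locate the rank-$r$ jump is exactly an elaboration of that. The details you add (the monotone jump function $N$, the bounded comparison degree and parallel depth, the hierarchical-cutting alternative) are consistent with the cited reference and do not diverge from the paper's argument.
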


With the two auxiliary data structures above, Theorem~\ref{th:analysis}
yields the following result.
\begin{theorem}\label{th:general}
A motorcycle graph can be computed in $O(n^{4/3+\eps})$ time, 
for any $\eps>0$.
\end{theorem}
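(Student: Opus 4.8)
The plan is to obtain Theorem~\ref{th:general} as a direct instantiation of Theorem~\ref{th:analysis}: once we fix concrete implementations of the ray-shooting and halving-query data structures and check that they meet the interface required by the algorithm of Section~\ref{sec:algorithm}, the bound follows by substitution. For ray shooting I would use the dynamic data structure of Agarwal and Matou{\v s}ek~\cite{agarwal1993ray} for ray shooting amid line segments, which has $O(n^{4/3+\eps})$ preprocessing time and $O(n^{1/3+\eps})$ time per query, insertion, and deletion. This matches the accounting in the proof of Theorem~\ref{th:analysis}: each event performs one ray-shooting query, and extending, shortening, or finalizing a tentative track amounts to $O(1)$ segment insertions and deletions, and the structure reports exactly the first segment hit by the ray $(t_i,\vec v_i)$, which is what Case~(3b) needs.

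For halving queries, I would argue that Lemma~\ref{lem:halving} already does essentially all the work, and that its output is turned into a valid answer in $O(1)$ time. Given a query $(i,p,q)$, first use the same range-counting machinery to get $k=|pq|$. If $k\leq 1$, return any point in the relative interior of $\overline{pq}$ that is not a crossing point; then $|ph|\leq 1$ and $|hq|\leq 1$, which satisfies the halving-query contract with $\rho=1/2$ (no strict decrease is required when $|pq|<2$). If $k\geq 2$, apply Lemma~\ref{lem:halving} to obtain the $\lceil k/2\rceil$th and $(\lceil k/2\rceil+1)$th crossing points $a,b$ along $\overline{pq}$, and return as $h(p,q)$ an arbitrary point in the relative interior of $\overline{ab}$. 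Such a point is not a crossing point (it lies strictly between two consecutive crossing points, using the general position assumption), and it gives $|ph|=\lceil k/2\rceil$ and $|hq|=\lfloor k/2\rfloor$, both at most $\lceil k/2\rceil=\lceil \rho|pq|\rceil$ and both strictly less than $k=|pq|$. So all conditions hold, and the preprocessing and query times are $O(n^{4/3+\eps})$ and $O(n^{1/3+\eps})$ as in Lemma~\ref{lem:halving}.

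Finally, substituting $P(n)=O(n^{4/3+\eps})$ and $Q(n)=O(n^{1/3+\eps})$ into Theorem~\ref{th:analysis} gives running time $O\bigl(n^{4/3+\eps}+n(n^{1/3+\eps}+\log n)\log n\bigr)=O\bigl(n^{4/3+\eps}\log n+n\log^2 n\bigr)$, and since $\log n=O(n^{\delta})$ for every $\delta>0$ we can absorb the logarithmic factors into the exponent and rename the constant, yielding $O(n^{4/3+\eps})$ for any $\eps>0$. The only step that is not purely mechanical is the halving-query reduction of the previous paragraph — reconciling the ``median crossing point and the next'' output of Lemma~\ref{lem:halving} with the precise halving-query specification, including the not-a-crossing-point and strict-decrease requirements and the small-$k$ corner cases; everything else is substitution and bookkeeping already set up by Theorem~\ref{th:analysis}.
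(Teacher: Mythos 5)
Your proof is correct and is essentially the same as the paper's: Section~\ref{sec:general} instantiates Theorem~\ref{th:analysis} with exactly the Agarwal--Matou{\v s}ek ray-shooting structure and the Lemma~\ref{lem:halving} data structure, with $P(n)=O(n^{4/3+\eps})$ and $Q(n)=O(n^{1/3+\eps})$. The only thing you add beyond the paper is the explicit (and correct) bookkeeping showing how the ``median and next'' output of Lemma~\ref{lem:halving} realizes the halving-query contract, including the not-a-crossing-point, strict-decrease, and small-$k$ conditions, which the paper leaves implicit.
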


It should be possible to replace the $n^\eps$ factor in the bounds
of Lemma~\ref{lem:halving} with a polylogarithm using known
range searching techniques~\cite{Chan12,Matousek92}, because we
only need a static data structure for halving queries, but in any
case we need a dynamic data structure for ray shooting queries, so
it would not improve our overall time bounds.

\subsection{$C$-Oriented Motorcycle Graphs}\label{sec:coriented}

We consider the special case where motorcycles can only take $C$
different directions $\vec{d_1},\dots,\vec{d_C}$.
Eppstein and Erickson gave an $O(n^{4/3+\eps})$-time
algorithm when $C=O(1)$. We show that with appropriate auxiliary data
structures, we can solve this case in time $O(n\log^3 n)$. In the
following, we do not assume that $C=O(1)$, so our time bounds will
also have a dependency on $C$.

\begin{proposition}\label{cor:coriented}
We can compute a $C$-oriented motorcycle graph in
$O(Cn\log^2(n) \min(C,\log n))$ time.
\end{proposition}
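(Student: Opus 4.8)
My plan is to instantiate Theorem~\ref{th:analysis} with data structures tailored to the $C$-oriented setting. That theorem reduces the problem to providing a ray-shooting structure and a halving-query structure; if both have preprocessing time $P(n)=O(n\log n)$ and per-operation time $Q(n)=O(C\log n\,\min(C,\log n))$, then, since our algorithm performs $O(n\log n)$ operations, the total running time is $O\bigl(P(n)+n(Q(n)+\log n)\log n\bigr)=O\bigl(Cn\log^2(n)\min(C,\log n)\bigr)$, which is the claimed bound. So the whole task is to build these two structures, exploiting that the $n$ supporting lines split into at most $C$ families of pairwise parallel lines.

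\textbf{Halving queries.} Fix a family direction $a$ and a line $\ell_i$ of family $a$. For another family $b$, the crossing points of $\ell_i$ with the lines of family $b$ occur along $\ell_i$ in the order in which the lines of $b$ are sorted by offset, and this order does not depend on $i$. So I would presort each of the $\le C$ families by offset, in $O(n\log n)$ time. To answer a query $(i,p,q)$, I would, for each of the $\le C-1$ families $b\neq a$, binary-search that sorted list for the contiguous block of lines whose crossing with $\ell_i$ lies in $\overline{pq}$, in $O(\log n)$ time each. The crossing points on $\overline{pq}$ are then the union of $\le C-1$ sorted blocks, and a multiway selection returns the $\lceil k/2\rceil$-th such point and the next one in $O(C\log n)$ time, where $k=|pq|$. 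Taking $h$ to be any non-crossing point strictly between them gives $|ph|=\lceil k/2\rceil$ and $|hq|=\lfloor k/2\rfloor$, so the halving conditions hold with $\rho=1/2$. This stays within budget.

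\textbf{Ray shooting.} Since the stored tracks never cross (Lemma~\ref{lem:invariant}) and, by general position, a direction-$a$ ray can only hit tracks of the other $\le C-1$ directions, I would maintain, for each ordered pair of directions $(a,b)$, a dynamic structure $\mathcal{D}_{ab}$ that, given a direction-$a$ query ray, returns the first direction-$b$ track hit; a ray-shooting query then probes the $\le C-1$ structures $\mathcal{D}_{a\cdot}$ and keeps the closest answer, and a track update touches the $\le C-1$ structures $\mathcal{D}_{\cdot b}$. After an affine map normalising the two directions, the direction-$b$ tracks become pairwise-parallel segments of a fixed slope and the ray becomes axis-parallel, so $\mathcal{D}_{ab}$ must solve an interval-stabbing problem with a secondary key (among the segments straddling the query abscissa, find the one whose offset is smallest on the appropriate side). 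I would solve this with standard structures, but the exact choice controls the final exponent: an interval/segment tree with a secondary sorted list gives $O(\log^2 n)$ per operation, hence $Q(n)=O(C\log^2 n)$, which meets the target when $C\ge\log n$; for $C<\log n$ I would instead use a structure tuned to the small number of directions (bundling the $C-1$ parallel-segment subproblems, or a priority-search-tree-style augmentation) so that $Q(n)=O(C^2\log n)$, which meets the target when $C\le\log n$. Combining the two regimes yields $Q(n)=O(C\log n\,\min(C,\log n))$.

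\textbf{Main obstacle.} The delicate part, and where I expect most of the work, is precisely the ray-shooting structure and its dependence on $C$ versus $\log n$: interval stabbing with a secondary key is naturally an $O(\log^2 n)$-per-operation problem, and one must argue that in the few-directions regime it can be sped up to recover the $\min(C,\log n)$ factor. By contrast, the halving structure and the event-counting bookkeeping borrowed from Theorem~\ref{th:analysis} are routine.
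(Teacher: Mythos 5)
Your proposal follows essentially the same route as the paper: both instantiate Theorem~\ref{th:analysis} by exploiting that the supporting lines split into $C$ parallel families, answer halving queries via binary search into $C$ offset-sorted lists (the paper takes a weighted median of per-family medians, giving $\rho=3/4$, while you do exact multiway selection -- both cost $O(C\log n)$), and split ray shooting into an $O(\log^2 n)$-per-operation regime and an $O(\log n)$-per-operation regime to obtain $Q(n)=O(C\log n\min(C,\log n))$. The part you leave sketched is exactly what the paper pins down with citations: it uses Cheng--Janardan~\cite{ChengJ92} ($C$ copies of dynamic planar-subdivision ray shooting, $O(\log^2 n)$ update, $O(\log n)$ query) for the large-$C$ regime and Giyora--Kaplan~\cite{GiyoraK07} ($C(C-1)$ copies of optimal dynamic vertical ray shooting among horizontal segments, $O(\log n)$ per operation) for the small-$C$ regime, which is precisely the ``priority-search-tree-style'' structure you conjecture should exist.
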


We use the following data structures, and then the result
follows from Theorem~\ref{th:analysis}.

\paragraph{Ray shooting data structures.} A first approach to answer
our ray shooting queries is to use $C$ instances of a data structure
for vertical ray shooting in a planar subdivision. Several
data structures are known for this problem~\cite{ArgeBG06}, we
use a data structure by Cheng and Janardan~\cite{ChengJ92} that takes
$O(\log^2 n)$ time per update and $O(\log n)$ time per query.
So overall, we get $Q(n)=O(C \log^2 n)$ with the terminology of
Theorem~\ref{th:analysis}.

Alternatively, we can use $C(C-1)$ instances of a data structure
for vertical ray shooting among horizontal segments. Each
data structure is used to answer ray shooting queries with
a given direction, into segments with another direction:
We just need to change the two coordinate axis to these two directions.
Using a recent result by Giyora and Kaplan~\cite{GiyoraK07}, we obtain
$Q(n)=O(C^2\log n)$.

\paragraph{Halving queries.} Our data structure for halving queries
simply consists of a sorted list of motorcycles for each direction.
So for each $k \in 1,\dots,C$, we have an array $\mathcal A_k$ of the
motorcycles with direction $\vec d_k$, sorted according to the intercept
of their supporting lines with a line orthogonal to $\vec d_k$.
We now explain how to answer a halving query $\overline{pq} \subset \ell_i$.

Without loss of generality, assume $\ell_i$ has direction $\vec d_1$. For each
direction $\vec d_k$, $k=2,\dots,C$, the subset of motorcycles whose supporting
lines cross $\overline{pq}$ appear in consecutive positions in $\mathcal A_k$.
We can find the first and the last index of these lines in $O(\log n)$ time
by binary search. So we obtain all the arrangement vertices in $\overline{pq}$
in $C-1$ sorted subarrays. We then compute the median $m_k$ of each 
such subarray $\mathcal A_k \cap \overline{pq}$,
and the median of these points $m_k$ weighted by the number of points
$|\mathcal A_k \cap \overline{pq}|$ in the corresponding subarray. This gives
a halving point $h(p,q)$ with $\rho=3/4$. The median of each subarray can
be found in $O(1)$ time, and their weighted median in $O(C)$ time~\cite{CLRS},
so the query time is dominated by the $C$ binary searches. Thus, we can
answer halving queries in $O(C\log n)$ time.

\subsection{Bounded precision input}\label{sec:bpmg}

The data structure for answering halving queries in Section~\ref{sec:general}
is quite involved. In practice, one would rather implement halving
queries by simply halving the Euclidean length $\|pq\|$
instead of approximately halving the number of crossing points. Unfortunately, 
in the infinite precision model that is commonly used in computational 
geometry, this would cause
the analysis of our algorithm in Lemma~\ref{lem:targets} to break down,
because a stack of target points $S_i$ may have size $\Omega(n)$
at the end of the algorithm. 

Such a counterexample would require the distance
between consecutive target points in $S_i$ to become exponentially small 
near the crashing point, which does not seem likely to happen in practice.
To formalize this idea, we make the assumption
that all input numbers (the coordinates of the starting points, the 
destination points, and the velocities) are rational numbers,
whose numerator and denominator are in $\{-2^{w-1},\dots,2^{w-1}-1\}$ for some 
integer $w$. In other words, the input numbers are $w$-bit signed integers.
We still assume that arithmetic operations between two numbers can be performed 
in constant time.  

This model also allows us to handle the case
where input coordinates are $w$-bit rational numbers,
that is, rational numbers with $w$-bit numerator
and denominator; we just need to scale up each coordinate by a
factor $2^{w}$ to obtain $2w$-bit integers, losing only
a constant factor in our time bounds.
In the proofs below we assume the input numbers are integers, to
simplify the presentation, but the results are stated with rational
coordinates.

As the input coordinates are $w$-bit integers, 
the coordinates of a crossing point $\chi_{ij}$ are rational
numbers obtained by solving a $2 \times 2$ linear system,  
their denominator being the determinant  $\det(\vec v_i,\vec v_j)$.
Thus, the denominator is an integer between $-2^{2w-1}$ and $2^{2w-1}$.
So any two distinct crossing points are at distance at
least $2^{-2w+1}$ from each other. 

Assume that we replace our halving operation, as defined in 
Section~\ref{sec:notation}, with halving the Euclidean
length. So $h(p,q)$ is the midpoint of $\overline{pq}$,
which can be computed in constant time.
Then any nested sequence of segments obtained by
successive halving, as in the proof of Lemma~\ref{lem:targets},
consists of $O(w)$ such nested segments, because
a segment $\overline{g_j\chi_j}$ of length smaller than
$2^{-2w+1}$ cannot contain another crossing point in its
interior, and hence it cannot be further subdivided.
So the bound on the size of $S_i$ becomes $O(w)$, and we get the following 
result.

\begin{theorem}\label{th:bpmotorcycle}
If the input coordinates are $w$-bit rational numbers, we can
compute a motorcycle graph in time  $O(nw(Q'(n)+\log n))$,
where $Q'(n)$ is the time needed for updates or queries in
the ray-shooting data structure.
\end{theorem}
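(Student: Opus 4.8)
The plan is to run exactly the algorithm of Section~\ref{sec:algorithm}, analyzed through Theorem~\ref{th:analysis}, but with a single change: the halving query $h(p,q)$ returns the Euclidean midpoint of $\overline{pq}$ instead of a point that approximately balances the number of crossing points. The midpoint is computed in $O(1)$ time, so the elaborate halving data structure of Section~\ref{sec:general} is no longer needed, and the only auxiliary structure that remains is the one for ray shooting, with update/query time $Q'(n)$---this is why the bound is phrased purely in terms of $Q'(n)$. I would first check that correctness is unaffected: the proof of Lemma~\ref{lem:invariant} and the subsequent correctness argument use only that $h(p,q)$ lies on $\overline{pq}$, strictly separates $p$ from $q$, and is not a crossing point. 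The first two properties hold for the midpoint; the third can be guaranteed at negligible cost by a symbolic perturbation of the midpoint (or it simply follows from the general position assumptions), and that perturbation preserves the geometric halving property up to a constant factor. So the algorithm still computes the motorcycle graph, and the first $n$ events and the event-processing rules are unchanged.

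What must be redone is the bound on the number of target points, i.e.\ the analogue of Lemma~\ref{lem:targets}. Following the proof of Theorem~\ref{th:analysis}, it suffices to show that at the end of the algorithm each stack $S_i$ contains $O(w)$ $\chi$-targets; then $O(nw)$ target points are created in total and $O(nw)$ events are processed. So I would re-run the argument of Lemma~\ref{lem:targets}, letting $\chi_1,\dots,\chi_m$ be the $\chi$-targets in $S_i$ in reverse order and $\overline{g_j\chi_j}$ the associated nested segments. The combinatorial step ``$h_{j-1}=h(g_{j-1},\chi_{j-1})$ forces the size $|g_j\chi_j|$ to shrink geometrically'' is replaced by its metric analogue: since $h_{j-1}$ is now the Euclidean midpoint of $\overline{g_{j-1}\chi_{j-1}}$, whichever of the two nesting possibilities holds, $\overline{g_j\chi_j}$ is contained in a segment of Euclidean length exactly $\tfrac12\|g_{j-1}\chi_{j-1}\|$, and hence $\|g_j\chi_j\| \le 2^{-(j-1)}\|g_1\chi_1\|$.

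It then remains to bound both ends of this geometric decay quantitatively. For the upper end, $\|g_1\chi_1\|$ is at most the diameter of the bounding box, which is $2^{O(w)}$ because the arrangement vertices of the supporting lines are crossing points whose coordinates are rationals with numerator and denominator of magnitude $2^{O(w)}$ when the input coordinates are $w$-bit integers. For the lower end, $\overline{g_{m-1}\chi_{m-1}}$ contains the two distinct crossing points $\chi_{m-1}$ and $\chi_m$, so $\|g_{m-1}\chi_{m-1}\| \ge 2^{-2w+1}$, using the bound $|\det(\vec v_i,\vec v_j)|\le 2^{2w-1}$ on crossing-point denominators noted just before the theorem. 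Combining, $2^{-2w+1} \le 2^{-(m-2)}\cdot 2^{O(w)}$, so $m=O(w)$; with at most $n$ further $\chi$-targets that correspond to actual crashes, $O(nw)$ target points and hence $O(nw)$ events are produced, each costing $O(Q'(n)+\log n)$, and the ray-shooting preprocessing is dominated by these $O(nw)$ queries in every instantiation we use, giving the claimed $O(nw(Q'(n)+\log n))$ bound. I expect the only real content to be this last paragraph: pinning down the $2^{O(w)}$ bound on the bounding-box diameter and the matching $2^{-O(w)}$ separation between distinct crossing points, together with the small bookkeeping that keeps the midpoint from ever coinciding with a crossing point so that the nesting in the adapted Lemma~\ref{lem:targets} remains strict.
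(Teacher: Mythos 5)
Your proposal is correct and follows essentially the same route as the paper: replace the counting-based halving query with the Euclidean midpoint, then re-run the nested-segment argument of Lemma~\ref{lem:targets}, using the $2^{-2w+1}$ lower bound on the separation between distinct crossing points (coming from the $2\times 2$ determinant denominator) together with the $2^{O(w)}$ upper bound on segment lengths (bounding box) to conclude that the chain has length $O(w)$, giving $O(nw)$ target points and events. The paper phrases the last step slightly differently---``once a segment is shorter than $2^{-2w+1}$ it contains no further crossing point in its interior, so it cannot be subdivided again''---but this is exactly your two-sided geometric-decay bound. You are actually a bit more careful than the paper on two small points that the paper glosses over (the explicit $2^{O(w)}$ bound on $\|g_1\chi_1\|$, and the bookkeeping needed to keep the midpoint off a crossing point so that $h$-targets and $\chi$-targets stay disjoint); both are handled sensibly in your write-up.
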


For bounded-precision input, the bottleneck
of our algorithm has thus become the ray shooting data structure,
whose update and query time bound is $O(n^{1/3+\eps})$ in 
the most general case. Therefore, we obtain a faster
motorcycle graph algorithm if we are in a special case where
faster ray-shooting data structures are known. One such
case is ray-shooting in a connected
planar subdivision, which can be done in $O(\log^2 n)$-time
per update and query using a data structure by Goodrich
and Tamassia~\cite{goodrich1997dynamic}. We can use
this data structure if, for instance, all motorcycles move
inside a simple polygon $P$, starting from its boundary. 
(So for all $i$, $\overline{s_id_i} \subset P$, and $s_i$ is on the boundary
of $P$.) Then we perform ray shooting in the union of the tentative
tracks and the edges of $P$, which  form a connected subdivision.
It yields the following time bound.
\begin{corollary}\label{th:bpsmotorcycle}
We can compute a motorcycle graph in time $O(n\log^3 n)$ 
for $n$ motorcycles moving inside a simple polygon
with $O(n)$ vertices, starting on its boundary, and if
the input has $O(\log n)$-bit rational coordinates.
\end{corollary}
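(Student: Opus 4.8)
The plan is to apply Theorem~\ref{th:bpmotorcycle} with word size $w=O(\log n)$, which reduces the corollary to exhibiting a ray-shooting data structure with update and query time $Q'(n)=O(\log^2 n)$: the bound $O(nw(Q'(n)+\log n))=O(n\log^3 n)$ then follows at once. (With bounded-precision input the halving operation is just the Euclidean midpoint of $\overline{pq}$, computable in $O(1)$ time, so it does not affect the time bound.) For the ray-shooting structure I would use the dynamic structure of Goodrich and Tamassia~\cite{goodrich1997dynamic} for a \emph{connected} planar subdivision, which supports edge insertions, edge deletions, and ray-shooting queries in $O(\log^2 n)$ time each. So the crux is to show that the object maintained by our algorithm can always be fed to this data structure, i.e.\ that it is at all times a connected planar subdivision.

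First I would establish the geometric invariant: at every moment of the execution, the union of the edges of $P$, the tentative tracks $\overline{s_it_i}$ of the motorcycles that have not crashed or stopped, and the confirmed tracks $\overline{s_i\kappa_i}$ of those that have, is a connected planar subdivision of a region containing $P$. Planarity follows from Lemma~\ref{lem:invariant} (no two tentative tracks cross) together with the observation that each tentative track is contained in $\overline{s_id_i}\subseteq P$ and hence does not cross $\partial P$; a confirmed track is a sub-segment of a tentative track, so the same holds. Connectivity follows because every track, tentative or confirmed, has $s_i$ as one endpoint, and $s_i$ lies on $\partial P$, which is connected; thus the whole union is connected. (To avoid degenerate grazing contacts between $\overline{s_id_i}$ and $\partial P$, one can add the mild assumption that $\overline{s_id_i}\cap\partial P=\{s_i\}$, which can be enforced in preprocessing.)

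Next I would translate the algorithm's operations into operations on this structure. Initially the subdivision is just $\partial P$, preprocessed in $O(n)$ time since it is already connected. Each event triggers one ray-shooting query from $t_i$ in direction $\vec v_i$; the structure returns the first subdivision edge hit, which is either another track --- this is Case~(3b), the hit edge identifying $j$ --- or a polygon edge, or nothing within $\overline{t_iq}$; in the latter two situations $\overline{t_iq}$ crosses no track and we are in Case~(3a). Extending, shortening, or finalizing a track amounts to deleting one segment and inserting another with the same starting endpoint $s_i$, and each event causes at most two such replacements (one for $i$, and one for $j$ in Case~(3b)). The point to check is that connectivity survives the transient state between a deletion and the following insertion: it does, because every \emph{other} segment is anchored to $\partial P$ through its own starting point, so removing a single segment from a connected subdivision that still contains $\partial P$ leaves it connected. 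Hence every intermediate configuration is a legal input to the structure of~\cite{goodrich1997dynamic}.

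Combining the pieces: by the analysis behind Theorem~\ref{th:bpmotorcycle} the algorithm processes $O(nw)=O(n\log n)$ events, each costing $O(1)$ midpoint computations ($O(1)$ time each), one ray-shooting query and $O(1)$ updates ($O(\log^2 n)$ time each), plus $O(\log n)$ for the priority queue, for a total of $O(n\log^3 n)$, with the $O(n)$ preprocessing absorbed. I expect the main obstacle to be the bookkeeping around maintaining connectivity under the delete/insert updates and correctly handling ray-shooting hits on $\partial P$ together with the grazing degeneracies mentioned above; the time analysis itself is then routine given Theorem~\ref{th:bpmotorcycle} and the stated performance of~\cite{goodrich1997dynamic}.
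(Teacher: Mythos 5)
Your proof is correct and follows essentially the same route as the paper: apply Theorem~\ref{th:bpmotorcycle} with $w=O(\log n)$, and supply the ray-shooting structure of Goodrich and Tamassia on the subdivision formed by $\partial P$ together with the tracks, which is connected because every track is anchored at a starting point $s_i\in\partial P$. The paper states this in two sentences without elaborating on planarity or on maintaining connectivity across delete/insert updates, so your additional detail is consistent with (and fills in) the intended argument rather than diverging from it.
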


\section{Application to Straight Skeleton Computation}\label{sec:skeleton}

In this section, we give new results on straight skeleton 
computation, using our new motorcycle graph algorithm.

\subsection{Preliminaries and non-degenerate cases}\label{sec:prelss}

As we mentioned in the introduction, the straight skeleton problem
and the motorcycle graph problem are closely related. We now explain
it in more details.

Consider the reflex (non-convex) vertices of a polygon $P$. When we 
construct the straight skeleton of $P$, these vertices move inward and may
collide into edges, or other vertices. These events, called 
split events and vertex events, are the difficult part of straight
skeleton computation, because they affect the topology of the
shrinking polygon by splitting it, and because they are non-local: 
A reflex vertex may affect a chain of edges on the other side of 
the polygon. The other type of events, called edge events, where
an edge shrinks to a point, are easily handled with a priority queue.
So the interaction between reflex vertices is a crucial part
in straight skeleton computation, and the motorcycle graph
presented below helps to determine these interactions.

The motorcycle graph {\em induced} by a polygon $P$ is such that 
each motorcycle starts at a reflex vertex, 
moves as the same velocity as the corresponding reflex vertex when 
we shrink $P$, and stops if it reaches the boundary $\partial P$ of $P$.
(See \figurename~\ref{fig:mgss}.)

If $P$ is degenerate, then two reflex vertices may collide and
create a new reflex vertex. In this case we need to create
a new motorcycle after the collision. 
(See \figurename~\ref{fig:degenerate}.)
So when two motorcycles
collide in the induced motorcycle graph, we may have to create
a new motorcycle~\cite{HuberH11}. Our motorcycle
graph algorithm, as described above, does not apply directly
to this case, because the proof of Lemma~\ref{lem:targets}
breaks down. (The reason is that $S_i$ may hold a linear number of target points
at the end of the execution of the algorithm, due to the newly created motorcycles.
See \figurename~\ref{fig:counterexample}.)
\begin{figure}
\begin{center}
\includegraphics[width=.9\textwidth]{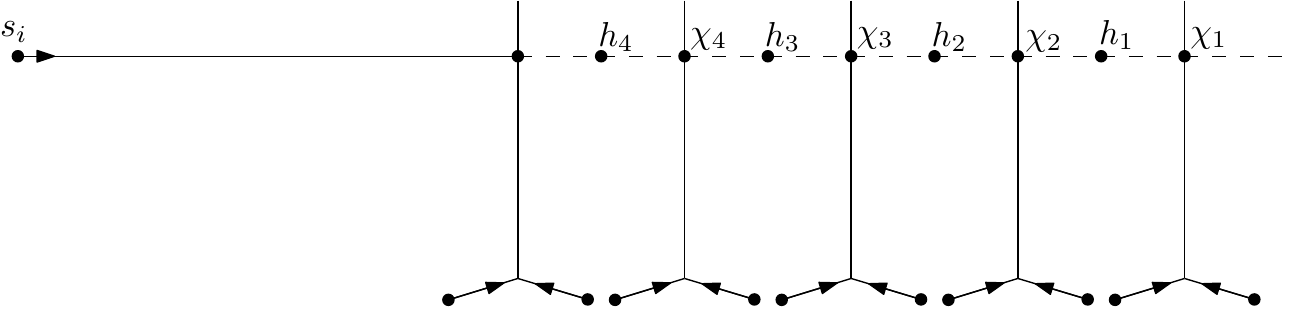}
\end{center}
\caption{An example where $S_i$ holds a linear number of target points
at the end of execution of the algorithm. The speed of the motorcycle
at the bottom are adjusted so that $\chi_1$ is created first, then $\chi_2$\dots
\label{fig:counterexample}}
\end{figure}
In Section~\ref{sec:bpss}, we  will explain how to compute these generalized motorcycle 
graphs  efficiently on bounded-precision input.
But the following theorem still holds in degenerate cases.

\begin{theorem}[Cheng and Vigneron~\cite{ChengV07}, 
Huber and Held~\cite{HuberH11}]\label{th:reduction}
The straight skeleton of a polygon $P$ with $n$ vertices and
$h$ holes can be computed in expected time $O(n\sqrt{h+1}\log^2 n)$
if we know the motorcycle graph induced by the vertices of $P$.
\end{theorem}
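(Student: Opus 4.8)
The plan is to recall the reduction of Cheng and Vigneron, extended to degenerate inputs by Huber and Held, which turns straight skeleton computation into a motorcycle graph computation (here assumed already done) together with a collection of three-dimensional lower-envelope computations. The starting point is the \emph{roof model}: lift $P$ into $\IR^3$ by assigning to each point $x\in P$ the height equal to the time at which $x$ is swept by $\partial P$ during the shrinking process; the straight skeleton of $P$ is then the vertical projection of the edges of this piecewise-linear terrain, the roof. Each edge $e$ of $P$ determines a plane $\pi_e$ in $\IR^3$ (the locus where the signed perpendicular distance from the supporting line of $e$ equals the height coordinate), and the roof is, over any given region, the lower envelope of some subset of these planes.

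The crucial structural step is to show that the induced motorcycle graph $M$ prescribes, for every face $F$ of the planar subdivision $M\cup\partial P$, exactly which planes $\pi_e$ contribute to the roof above $F$: the roof over $F$ equals the lower envelope of those planes, clipped to $F$. This is precisely where the motorcycle graph earns its keep, since it resolves all split and vertex events — the places where a distant edge $e$ would otherwise be occluded or where, in degenerate cases, a new motorcycle is spawned after a collision. Granting this lemma, the algorithm is: (i) take $M$; (ii) form the subdivision $M\cup\partial P$; (iii) for each face $F$ compute the relevant clipped lower envelope; (iv) glue the pieces and read off the straight skeleton. A lower envelope of $m$ planes in $\IR^3$ has complexity $O(m)$ and is computable in $O(m\log m)$ time, and $M\cup\partial P$ has total complexity $O(n)$.

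The remaining work, and the source of the $\sqrt{h+1}$ factor, is bounding the total cost of step (iii): a single plane $\pi_e$ may be relevant to many faces, so one cannot simply sum $O(m\log m)$ over faces. The plan is the geodesic divide-and-conquer of Cheng and Vigneron: cut $P$ along a shortest path between two holes into two subpolygons each containing at most a constant fraction of the $h$ holes, recurse on each, and merge the resulting partial skeletons across the cut; a balanced decomposition of $P$ according to its holes together with the randomized lower-envelope construction yields the stated $O(n\sqrt{h+1}\log^2 n)$ expected running time, the $\sqrt{h+1}$ reflecting the amortized cost of separating the holes over the levels of the recursion.

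The main obstacle is the structural lemma of step (iii), namely that $M$ pins down exactly the set of planes relevant to each face — this is the heart of the Cheng–Vigneron and Huber–Held arguments and what makes vertex events and newly created motorcycles behave correctly — together with the amortized bookkeeping that produces $\sqrt{h+1}$ rather than $h$. The roof model, the planarity of $M\cup\partial P$, and the $O(m\log m)$ lower-envelope subroutine are standard. Since the statement is quoted verbatim from~\cite{ChengV07,HuberH11}, I would present the above as a recap and cite those works for the full argument.
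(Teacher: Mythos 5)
This statement is quoted as a citation to~\cite{ChengV07} and~\cite{HuberH11}; the paper offers no proof of its own and uses the theorem as a black box, so there is no in-paper argument to compare your attempt against. You recognise this, and your recap of the cited reduction is broadly accurate on the points that matter: the roof (lower-envelope) model, the fact that the induced motorcycle graph pins down which planes contribute above each face of the subdivision $M\cup\partial P$ (thereby resolving split and vertex events, and in the degenerate case of~\cite{HuberH11}, newly spawned motorcycles), and the $O(m\log m)$ planar lower-envelope subroutine. One caution: your account of where $\sqrt{h+1}$ comes from — a geodesic divide-and-conquer that recursively cuts $P$ along shortest paths between holes — does not match my recollection of Cheng--Vigneron, where the $\sqrt{h+1}$ arises inside a randomized incremental lower-envelope construction whose expected conflict cost is bounded via the combinatorics of the motorcycle-graph partition, not by recursively bisecting the hole set. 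Since you explicitly defer to the references for the actual proof this is a minor issue, but if the paragraph is meant to be read as a faithful summary of the cited argument, you should verify that divide-and-conquer claim against the source rather than present it as established.
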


From the discussion above, and using our motorcycle graph algorithm
from Theorem~\ref{th:general}, we obtain the following result.
\begin{corollary}\label{cor:motorgeneral}
We can compute the straight skeleton of a non-degenerate polygon
with $n$ vertices and $h$ holes in
$O(n^{4/3+\eps}+n\sqrt{h+1}\log^2 n)$ time for any $\eps>0$.
\end{corollary}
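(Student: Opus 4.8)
The plan is to derive this as a direct consequence of Theorem~\ref{th:general} together with the reduction of Theorem~\ref{th:reduction}. First I would cast the computation of the motorcycle graph induced by $P$ as an instance of the motorcycle graph problem of Section~\ref{sec:notation}: for each reflex vertex of $P$, create a motorcycle whose starting point $s_i$ is that vertex and whose velocity $\vec v_i$ equals the velocity of that vertex during the shrinking process (the inward angular bisector direction of its two incident edges, scaled so that the edges move at unit speed). The destination point $d_i$ must be the first point of $\partial P$ met by the ray $(s_i,\vec v_i)$; I would obtain all $n$ of these points by building a ray-shooting data structure on the $O(n)$ edges of $\partial P$ --- for instance the one of Agarwal and Matou{\v{s}}ek~\cite{agarwal1993ray} --- and performing $n$ queries, which costs $O(n^{4/3+\eps})$ and is thus within budget (a simpler static segment ray-shooting structure would be faster, but this one suffices).

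The observation that makes this reduction valid is that $P$ is non-degenerate: no two reflex vertices collide during the shrinking process, so, unlike in the situation of \figurename~\ref{fig:degenerate} and~\ref{fig:counterexample}, no new motorcycle is ever created, and the instance built above is a genuine motorcycle graph instance in the sense of Section~\ref{sec:notation}. (We do not even need the general position assumptions, since our results hold in degenerate configurations of the supporting lines.) Running the algorithm of Theorem~\ref{th:general} on this instance therefore produces, in $O(n^{4/3+\eps})$ time, exactly the motorcycle graph induced by $P$: each motorcycle $i$ stops either where it first crashes into another track, or at $d_i \in \partial P$, which is precisely the definition of the induced motorcycle graph.

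Finally I would feed this induced motorcycle graph into the reduction of Theorem~\ref{th:reduction} (Cheng and Vigneron~\cite{ChengV07}, Huber and Held~\cite{HuberH11}), which outputs the straight skeleton of $P$ in $O(n\sqrt{h+1}\log^2 n)$ expected time. Summing the two bounds gives the claimed $O(n^{4/3+\eps}+n\sqrt{h+1}\log^2 n)$ expected running time. There is essentially no obstacle here --- the statement is a corollary --- but the one point deserving care is the verification just mentioned, namely that for a non-degenerate polygon the induced motorcycle graph really is an instance of the problem solved by Theorem~\ref{th:general} (in particular, that no motorcycle spawns during the process); and one should note that the randomization, i.e.\ the word ``expected'', enters only through Theorem~\ref{th:reduction}, the motorcycle graph step itself being deterministic.
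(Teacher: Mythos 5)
Your proof is correct and follows the same route as the paper: compute the induced motorcycle graph with the algorithm of Theorem~\ref{th:general} (valid because non-degeneracy guarantees no motorcycle is spawned during the process), then apply the reduction of Theorem~\ref{th:reduction}, and add the two bounds. The paper states this corollary tersely, simply referencing ``the discussion above'' and Theorem~\ref{th:general}; you merely fill in the unstated but routine detail of obtaining the destination points $d_i$ on $\partial P$ by ray shooting, which fits comfortably in the stated budget.
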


\subsection{Bounded precision input}\label{sec:bpss}

We use the same bounded precision assumptions as in Section~\ref{sec:bpmg},
where the input coordinates are $w$-bit integers or, equivalently,
$w$-bit rational numbers. Similarly, to simplify the presentation,
we use the integer model in the proofs, but we state the results
in the rational model.

Thus, the coordinates of the vertices of the input polygon $P$ are
$w$-bit integers. The supporting lines $\ell_i$ of the motorcycles
are angle bisectors between two edges of the input polygon.
In order to apply the same halving scheme as in Section~\ref{sec:bpmg},
where the Euclidean length is used instead of the number of arrangement
vertices, we need to argue that the separation between two 
vertices in this arrangement of bisectors cannot be too small.
This distance can be shown to be at least
$2^{-W}$, where $W=64(80w+105)+1=O(w)$, by applying the
separation bound by Burnikel et al.~\cite{burnikel1997strong}.
So we obtain a result for induced motorcycle graphs that is
analogous to Theorem~\ref{th:bpmotorcycle}.
\begin{lemma}\label{lem:bpinduced}
Given a polygon $P$ whose input coordinates are $w$-bit rational numbers, 
we can compute the motorcycle graph induced by $P$ in time  $O(nw(Q'(n)+\log n))$,
where $Q'(n)$ is the time needed for updates or queries in
the ray-shooting data structure.
\end{lemma}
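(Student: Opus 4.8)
The plan is to mirror the argument used to prove Theorem~\ref{th:bpmotorcycle}, but replacing the bound $2^{-2w+1}$ on the separation between crossing points of the motorcycle supporting lines with the corresponding separation bound for the arrangement of angle bisectors induced by $P$. First I would observe that the correctness proof in Section~\ref{sec:correctness} and the analysis of Theorem~\ref{th:analysis} do not depend on the precision of the input; only Lemma~\ref{lem:targets} uses the infinite-precision halving data structure, and this is precisely where the bounded-precision argument intervenes. So the structure of the proof of Lemma~\ref{lem:bpinduced} is: run our motorcycle-graph algorithm as in Section~\ref{sec:algorithm}, but answer each halving query $(i,p,q)$ by returning the Euclidean midpoint of $\overline{pq}$, which takes $O(1)$ time and is never a crossing point unless $p,q$ were chosen adversarially (a case we can perturb away), and then re-run the nested-segments analysis of Lemma~\ref{lem:targets}.

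The key step is the separation bound. Each supporting line $\ell_i$ is the angle bisector of two edges of $P$, hence is defined by an algebraic expression in the $w$-bit integer coordinates of $P$'s vertices; its coefficients involve square roots of sums of squares of $O(w)$-bit integers, so they are algebraic numbers of bounded degree whose defining polynomials have coefficients of bit-length $O(w)$. A crossing point $\chi_{ij}$ of two such bisectors is therefore an algebraic number of bounded degree and bit-length $O(w)$, and two distinct crossing points, being roots of a system of bounded degree and $O(w)$-bit coefficients, differ by at least $2^{-O(w)}$. The explicit constant $W = 64(80w+105)+1$ comes from plugging the degrees and coefficient bit-lengths of the bisector equations into the strong separation bound of Burnikel, Funke and Seel~\cite{burnikel1997strong}; I would carry out this bookkeeping (counting the arithmetic operations and square roots needed to express a bisector and an intersection of two bisectors, then invoking their bound) in a short lemma or inline computation. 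This is the one genuinely new ingredient, and it is the main obstacle: one has to be careful that the bisector of two segments is not a single line but may involve a case distinction, that the induced motorcycles start at reflex vertices whose coordinates are themselves $w$-bit integers, and that the separation bound is applied to the right algebraic quantity (the difference of the two coordinates of two crossing points along a common line).

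Given the separation bound $2^{-W}$ with $W = O(w)$, the rest is immediate. In the nested-segments argument of Lemma~\ref{lem:targets}, each segment $\overline{g_j\chi_j}$ strictly contains at least two crossing points (namely $\chi_j$ and $\chi_{j-1}$) in its closure, hence has Euclidean length at least $2^{-W}$; since successive halving by Euclidean midpoint at least halves the length at each step, and the first segment $\overline{g_1\chi_1}$ has length at most the diameter of the bounding box, which is $2^{O(w)}$, we get $m = O(W) = O(w)$. Thus each stack $S_i$ contains $O(w)$ $\chi$-targets at the end of the algorithm, so the total number of target points created is $O(nw)$, and by the argument preceding Theorem~\ref{th:analysis} the algorithm processes $O(nw)$ events, each costing $O(Q'(n)+\log n)$ time for the ray-shooting query, its update, and the priority-queue update; the halving queries are now $O(1)$. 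Adding the $O(n\log n)$ cost of computing the bounding box and the bisectors, and the preprocessing of the ray-shooting structure (which is subsumed in $O(nw(Q'(n)+\log n))$ for the structures we use), yields the claimed bound $O(nw(Q'(n)+\log n))$.
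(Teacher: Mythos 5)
Your proof tracks the paper's argument closely for the non-degenerate case: replace the halving query with a Euclidean midpoint, invoke the Burnikel et al.\ separation bound for the arrangement of bisectors to get a $2^{-O(w)}$ minimum distance between distinct crossing points, and re-run the nested-segments argument of Lemma~\ref{lem:targets} to bound the number of $\chi$-targets per stack by $O(w)$. That part is exactly what the paper does.

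However, there is a genuine gap: the lemma as stated (and as the paper uses it for Theorem~\ref{th:bpss}) applies to \emph{degenerate} polygons as well, and you never address how the algorithm and analysis cope with the fact that in a degenerate induced motorcycle graph, two motorcycles can collide and spawn a \emph{new} motorcycle mid-computation (see \figurename~\ref{fig:degenerate} and \figurename~\ref{fig:counterexample}). This is not a cosmetic omission. The paper explicitly points out that the combinatorial halving analysis of Lemma~\ref{lem:targets} \emph{fails} in this setting, because a stack $S_i$ can accumulate a linear number of $\chi$-targets when new motorcycles keep appearing. The bounded-precision midpoint scheme survives precisely because the proof no longer counts arrangement vertices on a segment but only uses the geometric separation bound, and because a newly spawned motorcycle still travels along the bisector of two \emph{input} edges (just not adjacent ones), so its crossing points obey the same $2^{-W}$ separation. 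Your proof, by re-running Lemma~\ref{lem:targets} verbatim and speaking only of "the $n$ supporting lines," implicitly assumes the set of motorcycles is fixed at time zero. To be complete, you would need to (i) say that the nested-segment length argument depends only on the separation bound and not on a fixed line arrangement, so it is robust to motorcycles appearing during the run, and (ii) observe that every motorcycle that ever exists lies on a bisector of two input edges, so the Burnikel et al.\ bound applies uniformly.
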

In the lemma statement above, we do not exclude degenerate cases.
This is another advantage of this bounded precision model. As
the argument in our analysis only relies on the separation bound
between two distinct crossing points, and not on the number of 
motorcycles crossing a given segment, a newly created motorcycle
does not affect our analysis as it still obeys the same separation
bound: A newly created motorcycle still lies on the bisector
of two input edges, though these two edges are not adjacent
in the input polygon~\cite{HuberH11}. 
(See \figurename~\ref{fig:degenerate}.)

We still need to describe an efficient ray-shooting data structure.
As our input polygon has $h$ holes, the boundary $\partial P$ of $P$
together with the tentative tracks form a collection of $h+1$ disjoint
simple polygons. We could directly use known ray-shooting data 
structures~\cite{AgarwalS96a,HershbergerS95}, which can be made
dynamic at the expense of an extra $n^\eps$ factor in the running
time~\cite{agarwal_geometric_1998}. In the following, we give
a different approach, which leads to a better time bound when
used as a subroutine of our algorithm.
This approach takes advantage of the fact that
the holes of $P$ are fixed (only the tentative tracks are dynamic).
We use a spanning tree with low crossing number, which is not
a new idea in ray-shooting data 
structures~\cite{ChazelleEGGHSS94,HershbergerS95}.

We pick one point on the boundary of each hole of $P$, and on the 
boundary of $P$. We connect these $h+1$ points using a spanning tree 
$\mathcal T$ with low stabbing number~\cite{agarwal1993applications}, that
is, a spanning tree such that any line crosses at most $O(\sqrt{h})$ 
edges of $T$. This tree can be computed in $O(n^{1+\eps})$ 
time~\cite[Section 8]{agarwal1993applications}.
We maintain a polygonal subdivision which is the overlay of
$P$ with  $\mathcal T$ and the tentative tracks.
So at each intersection between an edge of $\mathcal T$ and an
edge of $P$ or a tentative track, we split the corresponding
edges and tracks at the intersection point. This subdivision $\mathcal S$ 
is connected and has $O(n\sqrt{h})$ edges, and we maintain it in the
ray shooting data structure by Goodrich and 
Tamassia~\cite{goodrich1997dynamic}, which has $O(\log^2 n)$ 
update and query time. 

 Each time a tentative track
is extended or retracted, as a tentative track intersects $O(\sqrt{h})$
edge of $\mathcal T$, we can update the subdivision and the data
structure by making $O(\sqrt{h})$ updates in the ray
shooting data structure. Similarly, when our motorcycle graph
algorithm tries to extend a tentative track, we can find the
first tentative track being hit by a query ray in $O(\sqrt{h} \log^2 n)$
time: We first perform a ray shooting query in $\mathcal S$, which
takes $O(\log^2 n)$ time. If we hit an edge of $\mathcal T$, we make
another ray shooting query starting at the hitting point of the previous
query, and in the same direction. We repeat this process as long as
the result of the query is an edge of $\mathcal T$, and by the low-stabbing
number property, it may only happen $O(\sqrt{h})$ times.

Overall, our ray shooting data structure has update and query time
$O(\sqrt{h}\log^2 n)$. So by Theorem~\ref{th:reduction} and
Lemma~\ref{lem:bpinduced}, we obtain the following result. Note that
it still holds for degenerate input.
\begin{theorem}\label{th:bpss}
The straight skeleton of a polygon with $n$ vertices and $h$ holes, 
whose coordinates are $O(\log n)$-bit rational numbers, can be computed
in $O(n\sqrt{h+1}\log^3 n)$ expected time. 
\end{theorem}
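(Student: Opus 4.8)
The plan is to combine Theorem~\ref{th:reduction} with a version of Lemma~\ref{lem:bpinduced} in which the generic ray-shooting time $Q'(n)$ is replaced by the $O(\sqrt{h}\log^2 n)$ bound coming from the tailored data structure described in this section. First I would invoke the reduction: by Theorem~\ref{th:reduction}, once the induced motorcycle graph is known, the straight skeleton of a polygon with $n$ vertices and $h$ holes is computable in $O(n\sqrt{h+1}\log^2 n)$ expected time. Since the input coordinates are $O(\log n)$-bit rationals, we have $w=O(\log n)$, so by Lemma~\ref{lem:bpinduced} the induced motorcycle graph can be computed in $O(n\log n\,(Q'(n)+\log n))$ time, \emph{provided} we supply a ray-shooting data structure with update and query time $Q'(n)$. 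It therefore remains to establish that $Q'(n)=O(\sqrt{h}\log^2 n)$ for the specific structure built here, which then gives total time $O(n\log n\cdot\sqrt{h}\log^2 n)=O(n\sqrt{h}\log^3 n)$, and combining the two phases yields the claimed $O(n\sqrt{h+1}\log^3 n)$ expected time (the $\sqrt{h+1}$ rather than $\sqrt{h}$ simply absorbs the $h=0$ case, where the ray shooting is done directly in the connected subdivision formed by $\partial P$ and the tentative tracks, as in Corollary~\ref{th:bpsmotorcycle}).

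The core of the argument is the $O(\sqrt{h}\log^2 n)$ bound on ray shooting. Here I would proceed as follows. Pick one boundary point on $P$ and on each hole, giving $h+1$ points, and connect them by a spanning tree $\mathcal T$ with stabbing number $O(\sqrt{h})$, computable in $O(n^{1+\eps})$ time by~\cite{agarwal1993applications}; note this preprocessing is dominated by the other terms for $O(\log n)$-bit input only if $h$ is not too large, but in any case $n^{1+\eps}=o(n\sqrt{h+1}\log^3 n)$ fails when $h=O(1)$—so I would instead remark that the spanning-tree construction cost $O(n^{1+\eps})$ is subsumed because we may take $\eps$ small and the $\log^3 n$ factor dominates $n^\eps$; more carefully, one should state the bound with the understanding that $n^{1+\eps}$ preprocessing is acceptable, or fall back to the simple-polygon structure when $h$ is small. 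Overlay $P$, $\mathcal T$, and the tentative tracks into a connected planar subdivision $\mathcal S$ with $O(n\sqrt h)$ edges (each of the $O(n)$ tentative-track segments and each edge of $\mathcal T$ is split at its $O(\sqrt h)$ crossings), and store $\mathcal S$ in the Goodrich–Tamassia structure~\cite{goodrich1997dynamic}, which supports $O(\log^2 n)$ updates and queries. An extension or retraction of one tentative track changes $O(\sqrt h)$ edges of $\mathcal S$, hence costs $O(\sqrt h\log^2 n)$. A ray-shooting query is answered by iterated queries in $\mathcal S$: whenever the hit edge belongs to $\mathcal T$ we re-query from the hit point in the same direction; since the query ray crosses $\mathcal T$ at most $O(\sqrt h)$ times, this terminates after $O(\sqrt h)$ iterations, each costing $O(\log^2 n)$, for a total of $O(\sqrt h\log^2 n)$. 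Thus $Q'(n)=O(\sqrt h\log^2 n)$ as required.

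The main obstacle, and the step requiring the most care, is verifying that Lemma~\ref{lem:bpinduced}'s analysis genuinely goes through when the ray-shooting structure is this overlay rather than a black box: one must check that maintaining $\mathcal S$ is consistent with the motorcycle-graph algorithm's operations — in particular that when a tentative track is shortened (Case (3b)), removing the retracted portion and its induced $\mathcal T$-crossings is a valid $O(\sqrt h)$-update sequence, and that the query ray $(t_i,\vec v_i)$ correctly returns the first \emph{tentative track} hit (ignoring $\mathcal T$-edges) — which is exactly what the iterated-query scheme accomplishes. One must also confirm the separation-bound argument underlying Lemma~\ref{lem:bpinduced} is unaffected by newly created motorcycles in degenerate cases: each such motorcycle still lies on a bisector of two input edges, so its supporting line's crossing points with other supporting lines still obey the $2^{-W}$ separation with $W=O(w)=O(\log n)$, keeping every stack $S_i$ of size $O(w)=O(\log n)$. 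Granting these checks, the time bound follows by the chain of substitutions above.
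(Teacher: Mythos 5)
Your proposal matches the paper's proof essentially line for line: it combines Theorem~\ref{th:reduction} with Lemma~\ref{lem:bpinduced} (with $w=O(\log n)$), supplies the spanning-tree/overlay ray-shooting structure with $O(\sqrt{h}\log^2 n)$ update and query time via iterated Goodrich--Tamassia queries, and correctly observes that the separation-bound argument survives degenerate inputs because newly created motorcycles still run along bisectors of input edges. The detour you take worrying about the $O(n^{1+\eps})$ spanning-tree preprocessing is unnecessary: the low-stabbing-number tree is built only on the $h+1$ chosen boundary points, so its construction cost is $O((h+1)^{1+\eps})$, which is always dominated by $O(n\sqrt{h+1}\log^3 n)$ (the paper's ``$O(n^{1+\eps})$'' is simply a loose upper bound on this).
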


\pagebreak
\begin{center}
\huge {\bf Appendix} \normalsize
\end{center}

\appendix

\section{Pseudocode}\label{sec:pseudocode}

In this section, we give the pseudocode of our algorithm. It is more
detailed than the algorithm description in Section \ref{sec:algorithm}, and
it can handle degenerate cases. The proof of correctness and the analysis are
essentially the same as in Section~\ref{sec:main}, but they require
a more detailed case analysis.

To deal with the degenerate cases where some supporting lines
are concurrent, or two or more motorcycles reach a point at the same
time, we record all the target points created so far in a 
dictionary data structure $\mathcal D$. We can implement $\mathcal D$
as a balanced binary search tree, sorted in lexicographical order of
the coordinates $(x,y)$,
which allows to retrieve a point in $O(\log n)$ time.
We associate three fields with each point $p$ stored in $\mathcal D$:
\begin{itemize}
\item A list $M(p)$ records the motorcycles $i$ such that 
$p \in S_i$. So $M(p)$ records
all the motorcycles $i$ that could possibly reach $p$, at a given
point of the execution of our algorithm. The set $M(p)$ itself
is stored in a dictionary data structure, so that we can decide
in $O(\log n)$ time whether a motorcycle $i$ is in $M(p)$.
\item Two motorcycles $k,k'$ of $M(p)$ such that
$\tau(k,p),\tau(k',p)$ are smallest. It will allow us to
find out whether two motorcycles crash simultaneously at $p$.
\item A flag $\blocked(p)$ which is set to FALSE initially,
and is set to TRUE as soon as a confirmed track has reached $p$, 
implying that any other motorcycle that reaches $p$ must crash.
\end{itemize}

After the initialization stage, our algorithm handles repeatedly
the earliest available event, according
to the four cases (1), (2), (3a) and (3b) from 
Section~\ref{sec:algorithm}.

Lines~\ref{line:check_0} and~\ref{line:check_1} deal with Case (1) and (2). 
The condition $p=d_i$ corresponds to Case (1). The other two conditions check
whether we are in Case (2). 
In particular, condition $\blocked(p)=\mbox{TRUE}$ means that at least one other 
motorcycle has reached $p$, thus motorcycle $i$ crashes. 
With degenerate input, it is possible that another (or several other)
motorcycle reaches $p$ at the same time as $i$, 
in which case $\blocked(p)=\mbox{FALSE}$ if $(i,p)$ is the 
first event involving $p$ that has been processed.
The condition at Line~\ref{line:check_1} checks whether we are
in this situation. If so, $i$ must crash.  
As $(i,p)$ is the first event involving $p$ that we process,
there is no earlier event in $M(p)$, so we can find
another motorcycle that reaches $p$ at the same time in constant 
time using the second field associated with $p$ in $\mathcal D$.

Case (3a) corresponds to a positive answer to the test at 
Line \ref{line:check_2}. The condition $d(s_i, p') > d (s_i, \Top( S_i))$ 
detects whether the track of $i$ to the next target points hits any other
track. 
The other condition  $p' = \Top ( S_i)$ and $j\in M(p')$, 
checks for a boundary case, where $\Top(S_i)$
falls on another tentative track. The test is positive if 
$p'$ has already been identified as a target point of $j$ before. 
In this case we only extend the tentative track of $i$, without
doing any unnecessary halving. 

Line~\ref{line:check_3} branches to Case (3b). 
Similar to Line \ref{line:check_2}, we do not perform an
unnecessary halving operation when $i\in M(p')$ or $j \in M(p')$.

Our pseudocode does not handle explicitly the case where
two motorcycles have same supporting line. These cases can be easily
handled by ad-hoc arguments~\cite{ChengV07}. One way of doing it
is to insert additional target points at initialization.
For each supporting line shared by several motorcycles, between
any two consecutive motorcycles $i,j$ along this line that go 
toward each other, we
insert their potential collision point, that is, we insert into
$S_i$ and $S_j$ the point $p$ such that $\tau(i,p)=\tau(j,p)$.
For each motorcycle $i$ along this line, if 
the first starting point $s_j$ in the ray $(s_i,\vec v_i)$ 
is in $\overline{s_id_i}$, we also update $d_i$ to be $s_j$.

\begin{algorithm}
\caption{motorcycle\_graph}\label{algo1}
\begin{algorithmic}[1]
\State Initialize the dictionary $\mathcal D$. \Comment{Initialization}
\For {$i=1\to n$}
\State Set $c_i\gets s_i$, $ t_i \gets s_i$ and $S_i \gets \{ s_i, d_i \}$.
\State Insert $s_i$ and $d_i$ into $\mathcal D$; 
\State Insert motorcycle $i$ into $M(s_i)$ and $M(d_i)$. 
\EndFor
\State Initialize the event queue $Q$ and the data structures for
 	   ray-shooting and halving.
\While {${Q}$ is not empty} \Comment{Main loop}
    \State Let $(i,p)$ be the earliest available event. \Comment{So $p=t_i$.}
	\State Set $c_i\gets p$.
	\State Pop $\Top(S_i)$ from $S_i$. \Comment{Here $\Top(S_i)=p$.}
	\If {$\blocked(p)=\mbox{TRUE}$, or  $p=d_i$, or 
	\label{line:check_0}
	\State 	$\exists k\in M(p) \setminus\{i\}$ such that $\tau(k,p)=\tau(i,p)$}
	\label{line:check_1}
		\State Motorcycle $i$ crashes. 
		\State Set $t_i\gets p$. 
		\State Remove $i$ from $M(j)$ for all $j \in S_i$.
		\State Remove $S_i$ from $Q$.
	\Else 
    	\State $(j, p') \gets \operatorname{rayshooting} ( c_i, \vec{v}_i)$.
    	\If {$d(s_i, p') > d (s_i, \Top( S_i))$, 
			or ($p' = \Top ( S_i)$ and $j\in M(p')$)} \label{line:check_2}
			\State Set $t_i\gets \Top ( S_i)$.
		\Else  \label{line:check_3}
			\If {$i\in M(p')$}\label{line:check_4}
				\State Set $t_i\gets \Top ( S_i)$.
			\Else \label{line:check_5}
				\State Push $p'$ into $S_i$. 
				\State Push $p^*=h(c_i,p')$ into $S_i$.
				\State Set $t_i \gets p^*$.
				\State Insert $i$ into $M(p')$ and $M(p^*)$.
			\EndIf	
			\If	{$p'\notin\overline{s_j c_j}$ and $j\notin M(p')$}
			\label{line:check_6}	
				\State Push $p'$ into $S_j$.
				\State Push $p^*=h(c_j,p')$ into $S_j$.
				\State Set $t_j \gets p^*$.
				\State Insert $j$ into $M(p')$ and $M(p^*)$.
			\EndIf
		\EndIf
	\EndIf
	\State Set $\blocked(p) \gets \mbox{TRUE}$.
\EndWhile
\end{algorithmic}
\end{algorithm}

\pagebreak

\section{Example}\label{sec:example}
We give an example of the execution of our algorithm on a set of 4 motorcycles.
(Confirmed tracks are solid, and tentative tracks are dotted.)
It demonstrates two features of our algorithm, that were mentioned above.
\begin{itemize}
\item A tentative track may be longer than the final track in the motorcycle
graph. For instance, the tentative track $\overline{s_1d_1}$ in (b) is longer
than the final track $\overline{s_1\chi_{12}}$ in (q).
\item Our algorithm does not construct the motorcycle graph in chronological
order. For instance, in (i),  motorcycle 2 is moved to $\chi_{12}$, which
is its position at time $\tau(2,\chi_{12})=2.12072$.
Then in (k), motorcycle 3 is moved to $p_4$, which is its position
at time $\tau(3,p_4)=1.667206$.
\end{itemize}

The four motorcycles $1,2,3$ and $4$ start at time $0$ at initial points
$s_1=(0.8,3.3)$, $s_2=(0.5,1)$, $s_3=(5.7,0)$ and $s_4=(6, 3.4)$. 
Their velocities are $v_1 = 1.2(\cos -5^{\circ}, \sin -5^{\circ})$, 
$v_2 = 1.7(\cos 35^{\circ}, \sin 35^{\circ})$, 
$v_3 = 2(\cos 93^{\circ}, \sin 93^{\circ})$, and 
$v_4 = 0.8(\cos -37^{\circ}, \sin -37^{\circ})$. 

We use the halving scheme as specified in Section~\ref{sec:notation} with $\rho=1/2$.
So for instance, we create $p_4$ in (j) by halving $\overline{s_3\chi_{23}}$.
There are three crossings along this segment: $\chi_{13},\chi_{23},\chi_{34}$.
Then $p_4$ is created as a point between $\chi_{13}$ and $\chi_{34}$, in this
case we just use the midpoint. 
\begin{figure}[hbt!]
	\centering
	\begin{subfigure}[b]{0.45\textwidth}
		\centering\includegraphics[width=\textwidth]{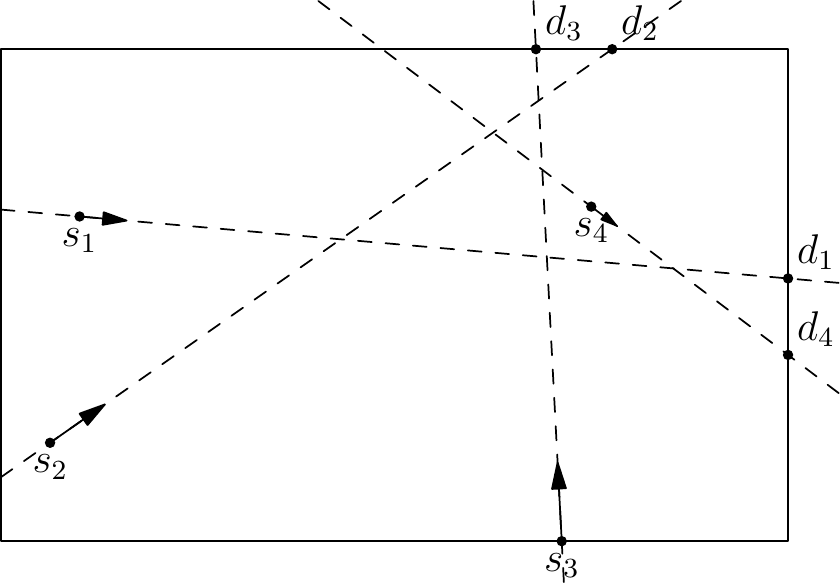}
		\caption{}\label{eg:0}
	\end{subfigure}
	\hspace{2ex}
	\begin{subfigure}[b]{0.45\textwidth}
		\centering\includegraphics[width=\textwidth]{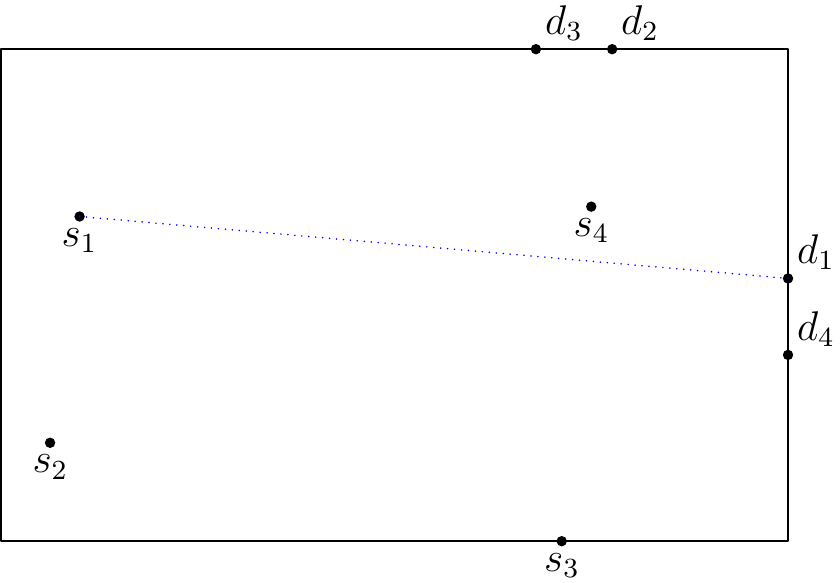}
		\caption{$\tau(1,d_1)=6.022919$}\label{eg:1}
	\end{subfigure}
	\begin{subfigure}[b]{0.45\textwidth}
		\centering\includegraphics[width=\textwidth]{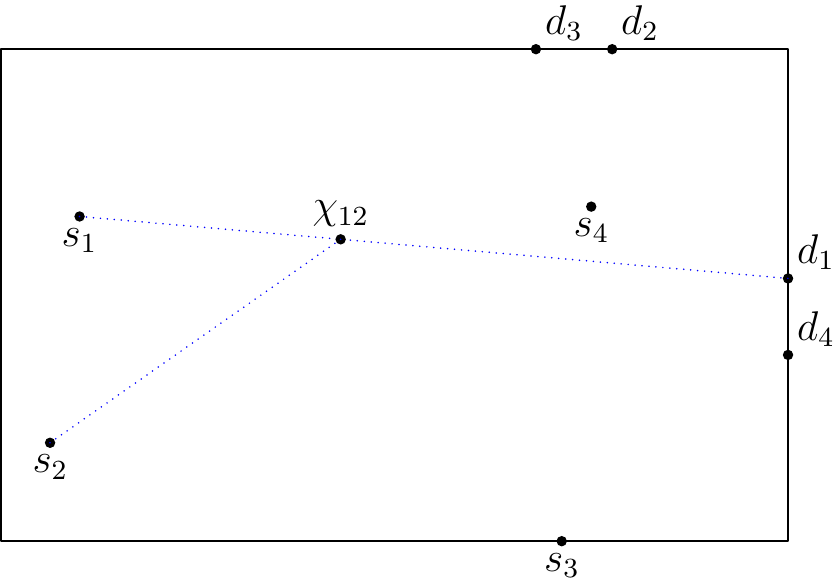}
		\caption{new events: $\tau(1, \chi_{12})=2.219469$, 
		and $\tau(2,\chi_{12})=2.120721$}\label{eg:2}
	\end{subfigure}
	\hspace{2ex}
	\begin{subfigure}[b]{0.45\textwidth}
		\centering\includegraphics[width=\textwidth]{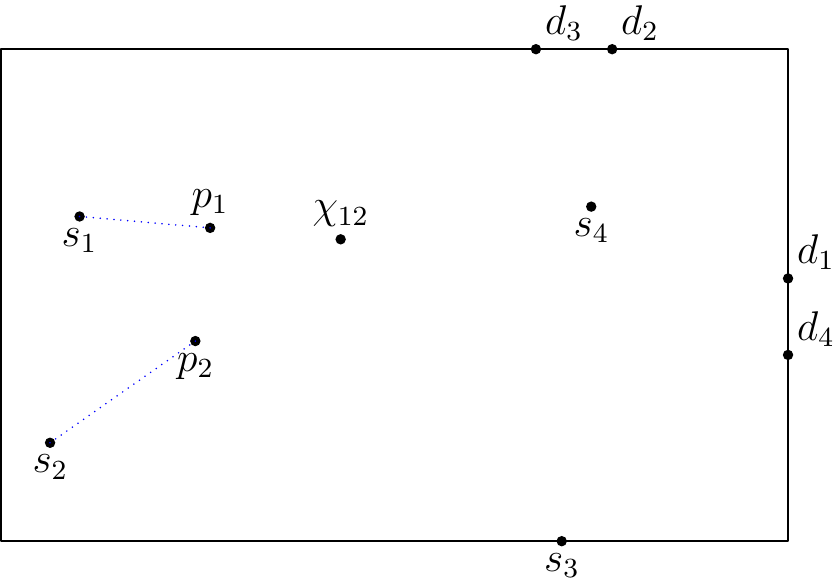}
		\caption{new events: $\tau(1,p_1)=1.109735$, and $\tau(2,p_2)=1.060361$}\label{eg:3}
	\end{subfigure}
\end{figure}
\begin{figure}[hbt!]
	\begin{subfigure}[b]{0.45\textwidth}
		\centering\includegraphics[width=\textwidth]{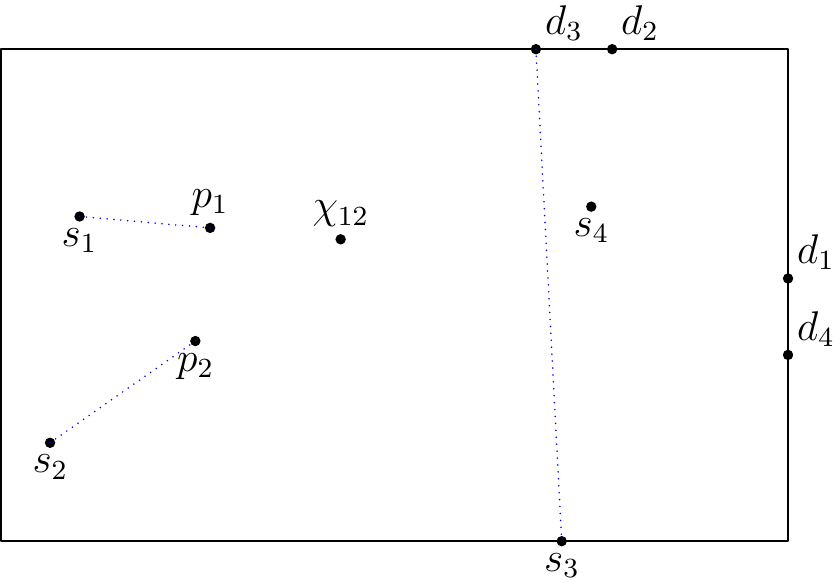}
		\caption{$\tau(3,d_3)=2.503431$}\label{eg:4}
	\end{subfigure}
	\hspace{2ex}
	\begin{subfigure}[b]{0.45\textwidth}
		\centering\includegraphics[width=\textwidth]{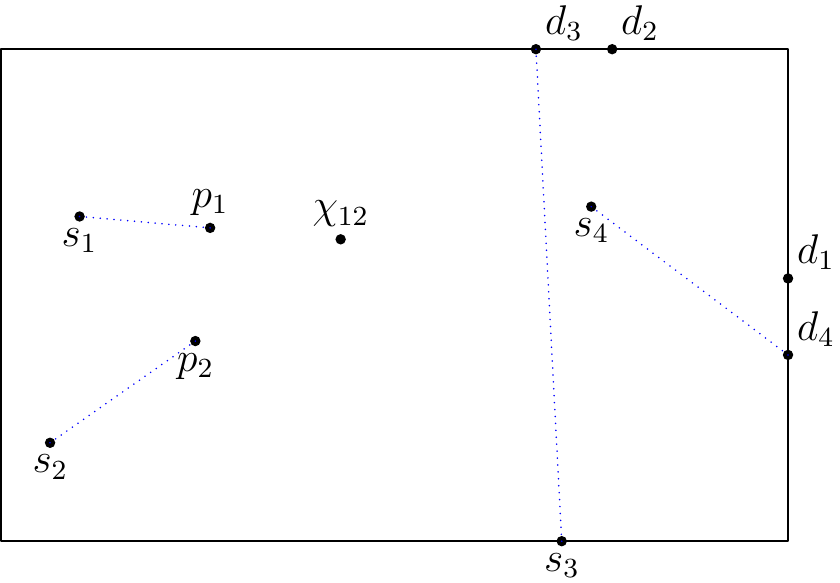}
		\caption{$\tau(4,d_4)=3.130339$}\label{eg:5}
	\end{subfigure}
	\centering
	\begin{subfigure}[b]{0.45\textwidth}
		\centering\includegraphics[width=\textwidth]{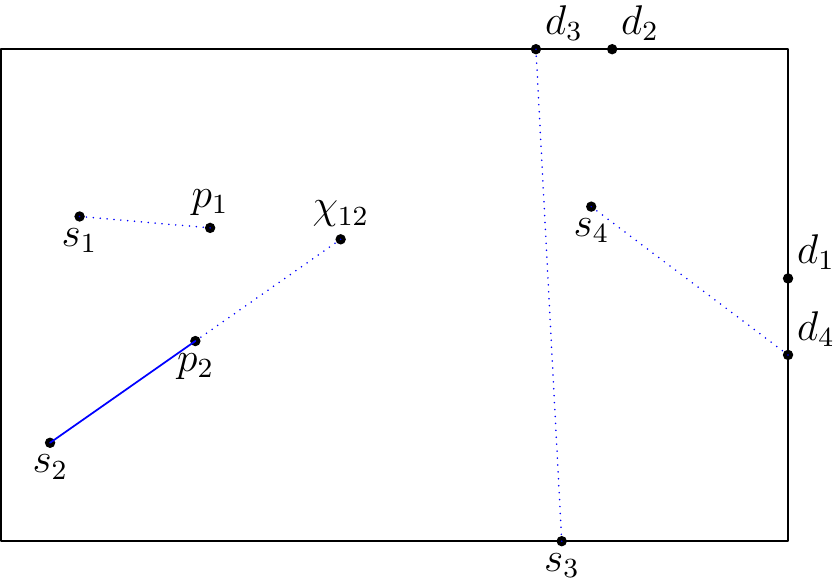}
		\caption{move: $\tau(2,p_2)=1.060361$}\label{eg:6}
	\end{subfigure}	
	\hspace{2ex}
	\begin{subfigure}[b]{0.45\textwidth}
		\centering\includegraphics[width=\textwidth]{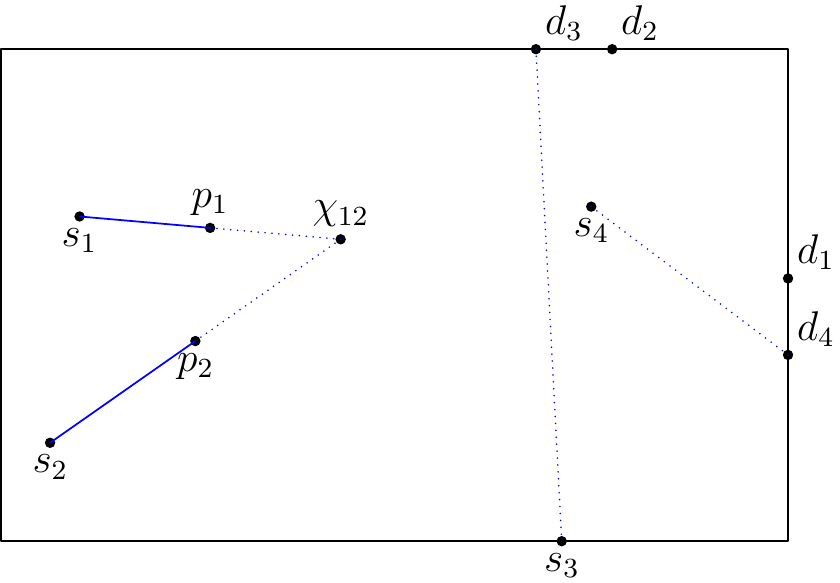}
		\caption{move: $\tau(1,p_1)=1.109735$}\label{eg:7}
	\end{subfigure}
	\begin{subfigure}[b]{0.45\textwidth}
		\centering\includegraphics[width=\textwidth]{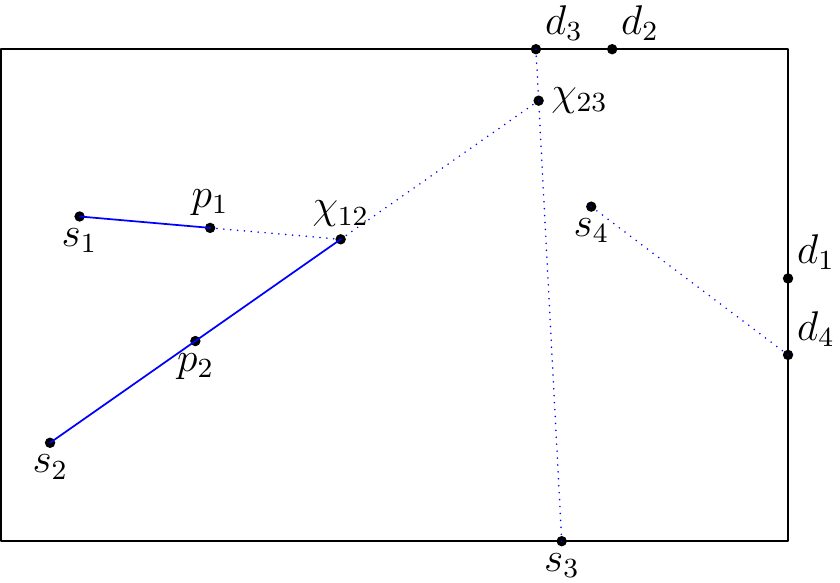}
		\caption{move: $\tau(2,\chi_{12})=2.120721$. new events:
		$\tau(2,\chi_{23})=3.565653$, and $\tau(3,\chi_{23})=2.24147$
		}\label{eg:8}
	\end{subfigure}
	\hspace{2ex}
	\begin{subfigure}[b]{0.45\textwidth}
		\centering\includegraphics[width=\textwidth]{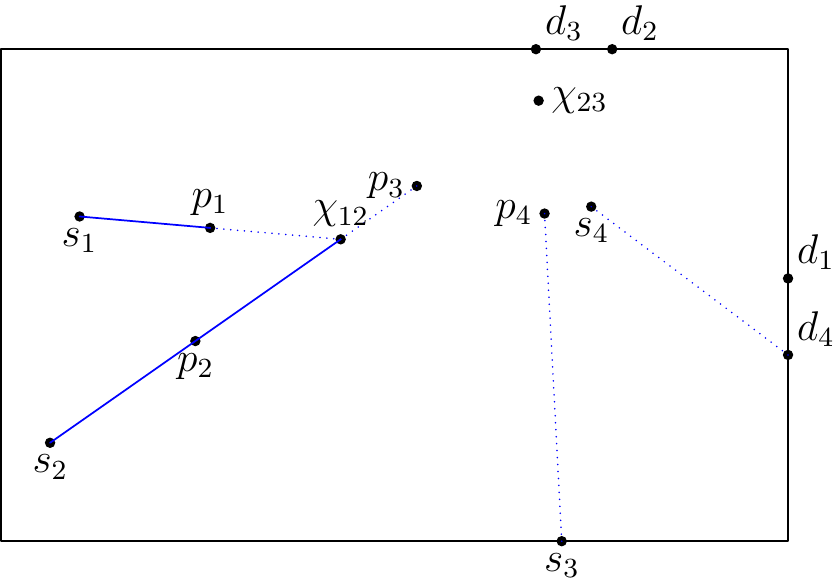}
		\caption{new events: $\tau(2,p_3)=2.676739$, 
		and $\tau(3,p_4)=1.667206$ }\label{eg:9}
	\end{subfigure}
\end{figure}
\pagebreak
\begin{figure}[htb!]
	\begin{subfigure}[b]{0.45\textwidth}
		\centering\includegraphics[width=\textwidth]{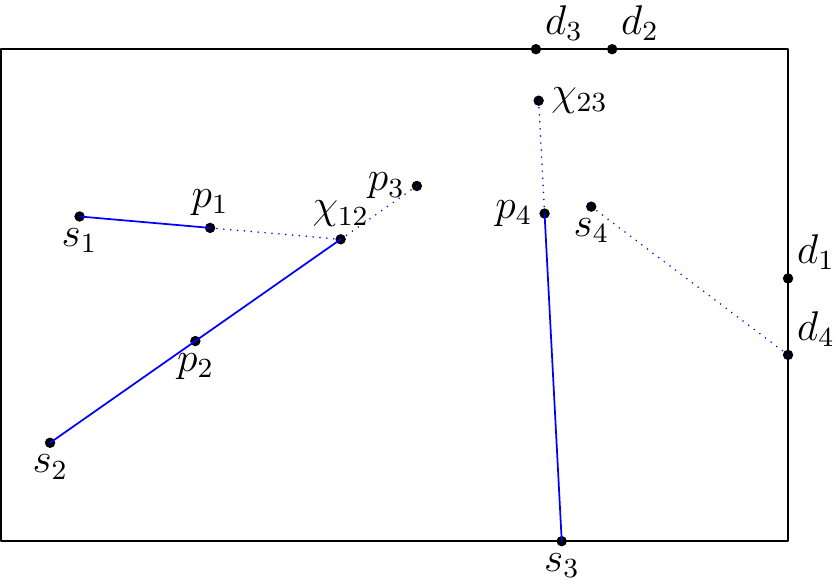}
		\caption{move: $\tau(3,p_4)=1.667206$}\label{eg:10}
	\end{subfigure}
	\hspace{2ex}
	\begin{subfigure}[b]{0.45\textwidth}
		\centering\includegraphics[width=\textwidth]{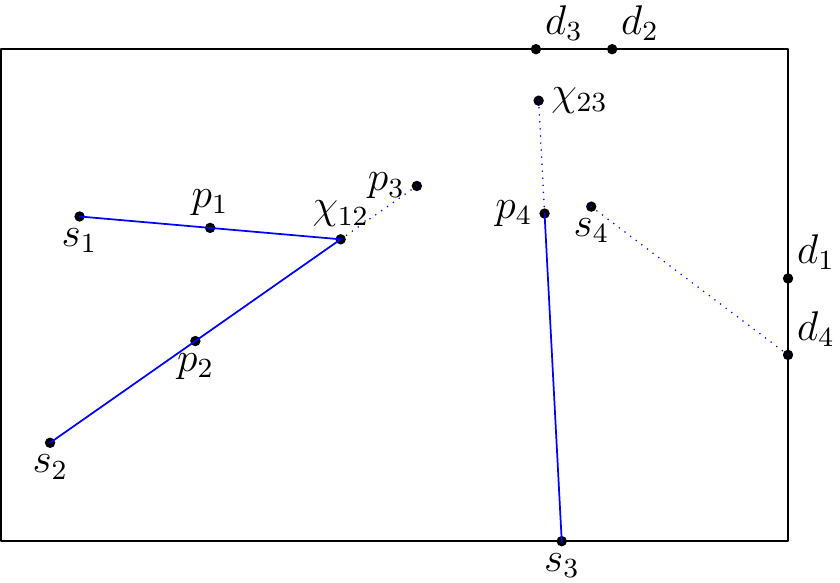}
		\caption{move: $\tau(1,\chi_{12})=2.219469$}\label{eg:11}
	\end{subfigure}
	\begin{subfigure}[b]{0.45\textwidth}
		\centering\includegraphics[width=\textwidth]{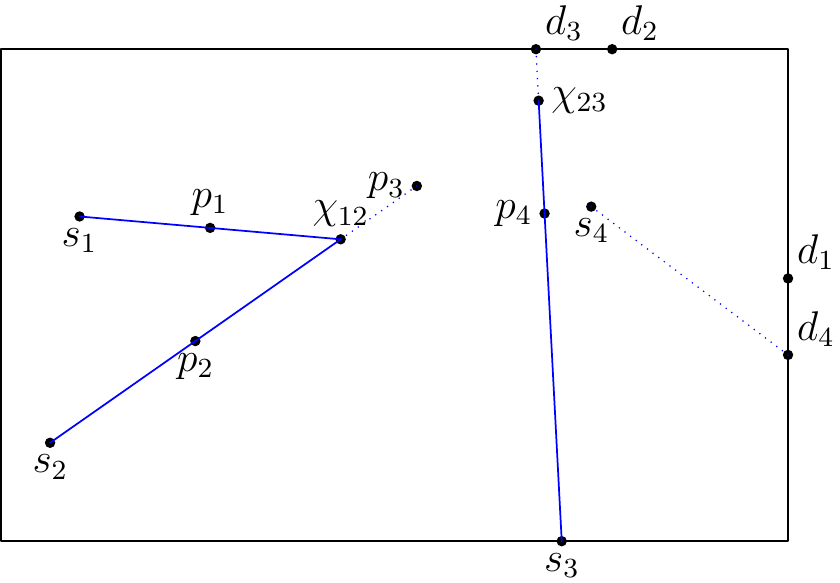}
		\caption{move: $\tau(3,\chi_{23})=2.24147$}\label{eg:12}
	\end{subfigure}
	\hspace{2ex}
	\begin{subfigure}[b]{0.45\textwidth}
		\centering\includegraphics[width=\textwidth]{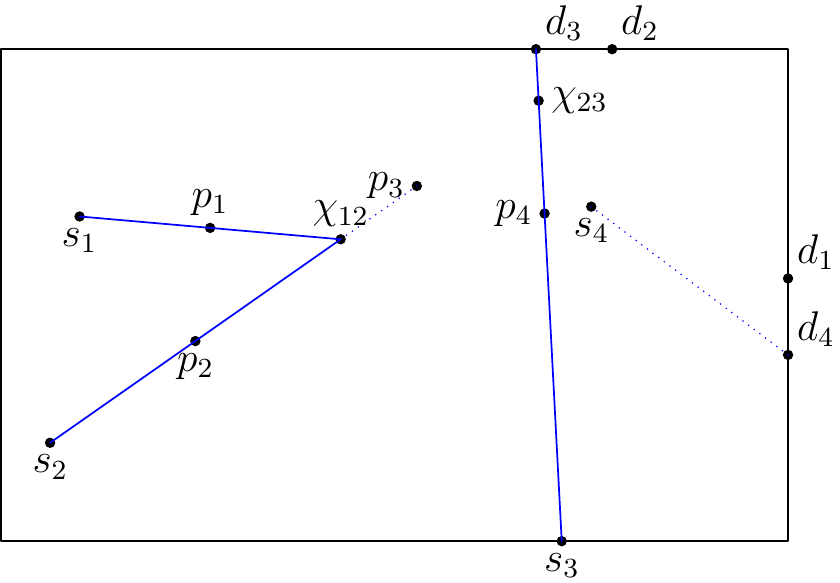}
		\caption{move: $\tau(3,d_3)=2.503431$}\label{eg:13}
	\end{subfigure}
	\begin{subfigure}[b]{0.45\textwidth}
		\centering\includegraphics[width=\textwidth]{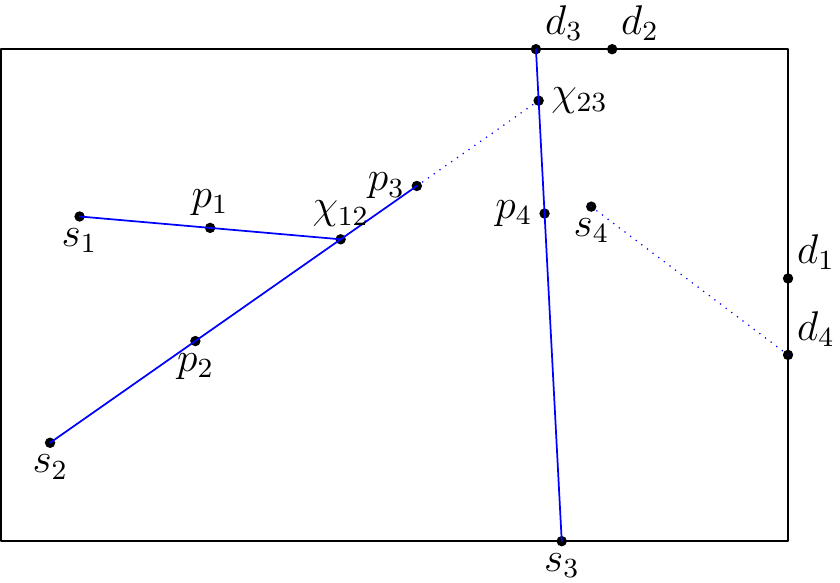}
		\caption{move: $\tau(2,p_3)=2.676739$}\label{eg:14}
	\end{subfigure}
	\hspace{2ex}
	\begin{subfigure}[b]{0.45\textwidth}
		\centering\includegraphics[width=\textwidth]{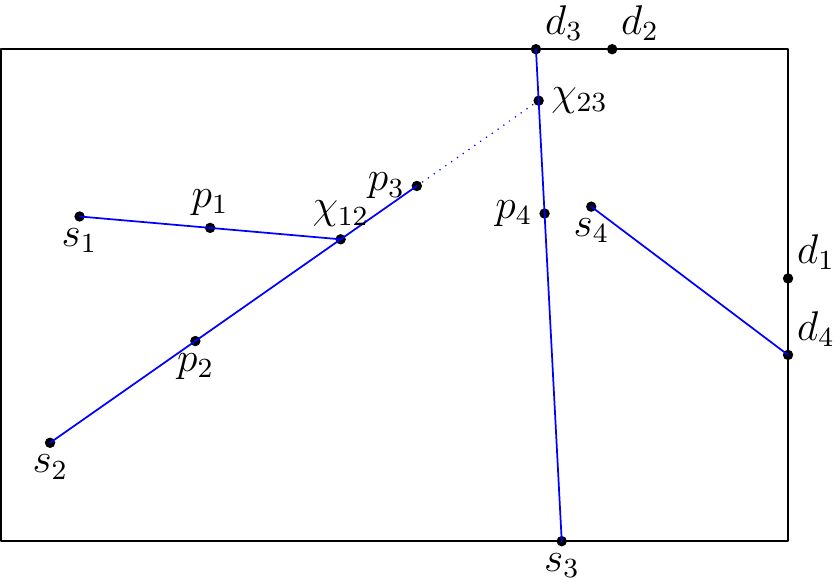}
		\caption{move: $\tau(4,d_4)=3.130339$}\label{eg:15}
	\end{subfigure}
	\begin{subfigure}[b]{0.45\textwidth}
		\centering\includegraphics[width=\textwidth]{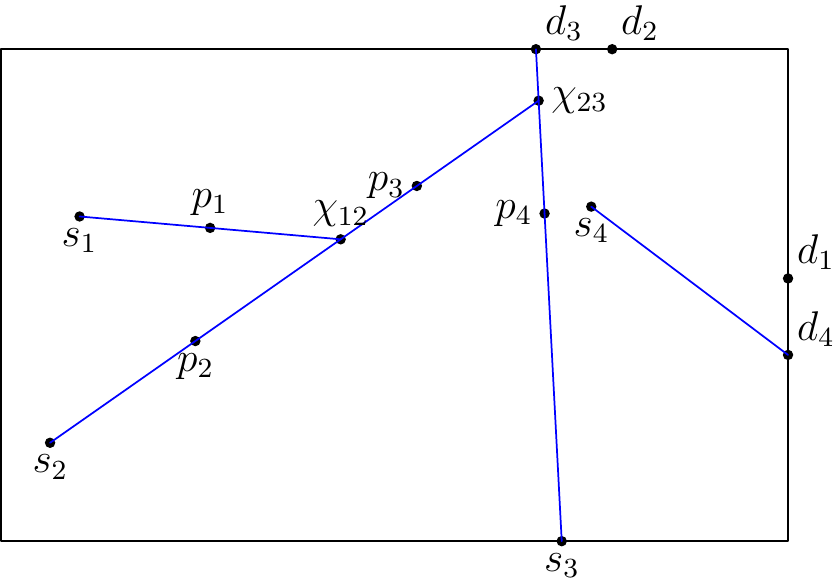}
		\caption{move: $\tau(2,\chi_{23})=3.565653$}\label{eg:16}
	\end{subfigure}
\end{figure}

\end{document}